\numberwithin{equation}{section}
\newtheoremstyle{TheoremStyle}
{3pt}
{3pt}
{\slshape}
{}
{\bf}
{:}
{.5em}
{}
\theoremstyle{TheoremStyle}
\newtheorem{theorem}{Theorem}
\newtheorem{proposition}[theorem]{Proposition}
\newtheorem{lemma}[theorem]{Lemma}
\newtheorem{definition}[theorem]{Definition}
\newtheorem{remark}[theorem]{Remark}
\newtheorem{example}[theorem]{Example} 
\title{Reconstruction Theorem for Germs of Distributions on Smooth Manifolds}
\author{
	Paolo Rinaldi \thanks{PR: Dipartimento di Fisica,
	Universit\`a degli Studi di Pavia \& INFN, Sezione di Pavia, 
	Via Bassi 6,
	I-27100 Pavia,
	Italia; Istituto Nazionale di Alta Matematica, Sezione di Pavia, via Ferrata 5, 27100 Pavia, Italia
	\mbox{paolo.rinaldi01@universitadipavia.it}
}
\and
	Federico Sclavi \thanks{FS: Dipartimento di Fisica,
	Universit\`a degli Studi di Pavia \& INFN, Sezione di Pavia, 
	Via Bassi 6,
	I-27100 Pavia,
	Italia; Istituto Nazionale di Alta Matematica, Sezione di Pavia, via Ferrata 5, 27100 Pavia, Italia
\mbox{federico.sclavi01@universitadipavia.it}}
}
\begin{document}
\maketitle
\begin{abstract}
The \emph{reconstruction theorem} is a cornerstone of the theory of regularity structures \cite{Hai14}. In \cite{CZ20} the authors formulate and prove this result in the language of distributions theory on the Euclidean space $\mathbb{R}^d$, without any reference to the original framework. In this paper we generalize their constructions to the case of distributions over a generic $d$-dimensional smooth manifold $M$, proving the reconstruction theorem in this setting. This is done having in mind the extension of the theory of regularity structures to smooth manifolds.
\end{abstract}
\paragraph*{Keywords:}
Distributions on smooth manifolds, reconstruction theorem, regularity structures.
\paragraph*{MSC 2020:} 46F99, 58C99.

\section{Introduction}
The \emph{Reconstruction Theorem} is one of the cornerstones of the Theory of Regularity Structures \cite{Hai14}, the framework in which this theorem was first formulated.
This theory provides a milestone in the analysis of stochastic partial differential equations on the Euclidean space $\mathbb{R}^d$, which is the original motivation for this theory, since it allows to apply fixed point techniques to such equations. \\
Stochastic partial differential equations are also closely related to quantum field theory, in particular through stochastic quantization \cite{PW81}, which links stochastic PDEs with the path integral formulation of Euclidean field theory. The idea at the heart of stochastic quantization is construct the path integral measure of an Euclidean interacting field theory as the invariant measure of a stochastic process whose dynamics is ruled by a parabolic non-linear stochastic PDE. 
Recently, again in the interplay between stochastic PDEs and quantum field theory, there are also some efforts to apply the techniques proper of the latter to problems of the former, in particular for renormalization \cite{DDRZ20}.

Nonetheless, on account of general relativity, a natural and more general setting for quantum field theory is represented by curved spacetimes.
As a consequence, from the point of view of quantum field theory on curved spacetimes \cite{BFDY15, book, Fredenhagen:2014lda, DDR20}, in order to extend this fruitful interaction with stochastic PDEs also to this framework, it would be desirable to have a formulation of the theory of regularity structures on a smooth manifold and a first step in this direction should be the formulation of the reconstruction theorem on a smooth manifold. This is the aim of the present paper.
There are already some efforts to this end \cite{DDK19}, where the authors consider the Riemannian case.

Recently, in \cite{CZ20}, the authors proved that this result can be formulated as a result of distributions theory on the Euclidean space $\mathbb{R}^d$, \emph{i.e.}, in $\mathcal{D}^\prime(\mathbb{R}^d)$, without any reference to the theory of regularity structures.
In particular, the problem is formulated in the following way.
If for any $x\in\mathbb{R}^d$ we are given a distribution $F_x\in\mathcal{D}^\prime(\mathbb{R}^d)$, one may wonder whether there exists a distribution $f\in\mathcal{D}^\prime(\mathbb{R}^d)$ which is \emph{locally} approximated, in a suitable sense, by $F_x$ in a neighbourhood of $x\in\mathbb{R}^d$, for any $x\in\mathbb{R}^d$.
In their paper, the authors proved that this is actually the case under a further hypothesis, dubbed \emph{coherence}, providing a bound for the difference $F_x-F_y$ for $y$ and $x$ sufficiently close. 
This condition, which is closely related to the generalized H\"older condition, is inspired, in the language of regularity structures, by the notion of \emph{model} and of \emph{modelled distribution}.

From the viewpoint of the extension to the manifold setting, the main advantage of the formulation of \cite{CZ20} is that it is formulated in a purely distributional language.
Since distributions have an intrinsic \emph{local} nature, they can be considered in a very natural way also on generic smooth manifold $M$ \cite{Hor03, BGP07}. This argument makes this version of the reconstruction theorem the most convenient for the extension to the smooth manifold setting. 

To this end we translate the notion of germ of distributions and the key notion of coherence to the case of a smooth manifold, yielding a more local definition of these notions. 
Nonetheless, in Proposition \ref{Prop: independence from the atlas} we prove that the notion of coherence we give is actually independent of the atlas, in agreement with the locality of the whole construction. As a consequence we have a geometric notion of coherence.
\\

The main result of this paper is the reconstruction theorem, see Theorem \ref{Thm: reconstruction}, for $\gamma$-coherent germs of distributions, with $\gamma>0$, see Section \ref{Sec: main definitions} for details. 
In particular, with the notation of Definition \ref{Def: coherence}, the main result is the following.
\begin{theorem}\label{Thm: main result}
Let $M$ be a $d$-dimensional smooth manifold and let $\mathcal{A}=\{(U_j,\phi_j)\}_{j}$ be an atlas over $M$. 
Let $\gamma>0$ and let $F=(F_p)_{p\in M}$ be a $\gamma$-coherent germ of distributions on $(M,\mathcal{A})$. 
There exists a unique distribution $\mathcal{R}F\in\mathcal{D}^\prime(M)$ such that, for any local chart $(U,\phi)\in\mathcal{A}$, $\phi_*(\mathcal{R}F)\in\mathcal{D}^\prime(\phi(U))$ satisfies, for any compact set $K\subset U$ and for any $h\in\mathcal{D}(\phi(U))$,
\begin{align*}
|(\phi_*(\mathcal{R}F)-\phi_*(F_p))(h_{\phi(p)}^\lambda)|\lesssim\lambda^\gamma\,,
\end{align*}
uniformly for $p\in K$ and for $\lambda\in(0,1]$.
\end{theorem}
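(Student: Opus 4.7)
The plan is to reduce to the Euclidean reconstruction theorem of \cite{CZ20} chart by chart and to glue the resulting local distributions via the sheaf property of $\mathcal{D}'$, using the atlas-independence of coherence (Proposition \ref{Prop: independence from the atlas}) as the bridge. Fix a chart $(U_j,\phi_j)\in\mathcal{A}$: the pushforward family $(\phi_{j,*}F_p)_{p\in U_j}$ is a $\gamma$-coherent germ on $\phi_j(U_j)\subset\mathbb{R}^d$ in the sense of \cite{CZ20}, by the chart-wise nature of coherence. The Euclidean reconstruction theorem then yields a unique $g_j\in\mathcal{D}'(\phi_j(U_j))$ satisfying
\begin{align*}
|(g_j-\phi_{j,*}F_p)(h^\lambda_{\phi_j(p)})|\lesssim\lambda^\gamma,
\end{align*}
uniformly for $p$ in any compact $K\subset U_j$ and $\lambda\in(0,1]$, for every $h\in\mathcal{D}(\phi_j(U_j))$. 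Define the local candidate on $U_j$ by $R_j:=\phi_j^{\ast}g_j\in\mathcal{D}'(U_j)$, i.e.\ the unique distribution whose $\phi_j$-pushforward is $g_j$.

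The core step is then to show that these local candidates agree on overlaps. For $U_i\cap U_j\neq\emptyset$, let $\psi_{ij}:=\phi_j\circ\phi_i^{-1}$ denote the transition diffeomorphism. On $\phi_j(U_i\cap U_j)$, both $g_j$ and $\psi_{ij,*}g_i$ are Euclidean reconstructions of the pushforward germ $(\phi_{j,*}F_p)_{p\in U_i\cap U_j}$: for $g_j$ this is tautological, while for $\psi_{ij,*}g_i$ one must verify that the pullback of a rescaled bump $h^\lambda_{\phi_j(p)}$ under $\psi_{ij}$ is, up to bounded distortion controlled by derivatives of $\psi_{ij}$ on compacts, a rescaled bump around $\phi_i(p)$ at the same scale $\lambda$. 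Proposition \ref{Prop: independence from the atlas} is precisely what guarantees that this distortion preserves the exponent $\gamma$ in the coherence bound, so that the reconstruction estimate transports from the $\phi_i$-chart to the $\phi_j$-chart. The uniqueness part of the Euclidean reconstruction theorem then forces $g_j=\psi_{ij,*}g_i$ on $\phi_j(U_i\cap U_j)$, equivalently $R_i|_{U_i\cap U_j}=R_j|_{U_i\cap U_j}$. Since $\mathcal{D}'$ is a sheaf on $M$, the compatible family $\{R_j\}_j$ glues uniquely to $\mathcal{R}F\in\mathcal{D}'(M)$ with $\mathcal{R}F|_{U_j}=R_j$, and this distribution automatically satisfies the claimed bound in every chart.

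For global uniqueness, any $f\in\mathcal{D}'(M)$ satisfying the stated estimate in every chart produces, chart by chart, a Euclidean reconstruction $\phi_{j,*}(f|_{U_j})$ of $(\phi_{j,*}F_p)_{p\in U_j}$, which must coincide with $g_j$ by the Euclidean uniqueness in \cite{CZ20}; hence $f=\mathcal{R}F$. The main obstacle is the overlap step: one has to check carefully that the rescaling operation $h\mapsto h^\lambda_x$ is sufficiently compatible with smooth diffeomorphisms, so that a coherence estimate in the $\phi_i$-chart entails the analogous estimate in the $\phi_j$-chart with the same exponent $\gamma$. This is exactly what Proposition \ref{Prop: independence from the atlas} codifies, and its careful deployment is what makes the chart-wise reduction to \cite{CZ20} legitimate.
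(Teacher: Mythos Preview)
Your proposal is correct and follows essentially the same route as the paper: reduce chart by chart to the Euclidean reconstruction theorem of \cite{CZ20}, verify compatibility on overlaps via the Jacobian bound for the transition maps, and glue (the paper phrases the gluing via Theorem \ref{Trm: Hormander} rather than the sheaf property, and concludes the overlap identity through the direct estimate plus Lemma \ref{Lemma: partial evaluation} rather than by invoking Euclidean uniqueness, but these are equivalent). One small correction: the result you should be invoking at the overlap step is not Proposition \ref{Prop: independence from the atlas} (which concerns atlas-independence of the \emph{coherence} condition on the germ) but rather the Jacobian argument of Proposition \ref{Proposition: independence on charts}, applied now to the reconstruction bound---this is exactly the $|B|$-estimate carried out explicitly in the paper's proof.
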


This result is proven as a consequence of a localized (on an open set) version of the reconstruction theorem of \cite{CZ20} and of the very characterization of the notion of distribution on a smooth manifold -- see Appendix \ref{Appendix: distribution on manifold} and \cite{Hor03}.
A further advantage of our result is that it shows that the reconstruction theorem holds true already at the level of smooth manifolds, without calling for further structures, such as Riemannian ones.

Finally, we discuss in detail the dependence of the reconstruction on the the atlas for $\gamma$-coherent germs of distributions with $\gamma>0$.
In particular, in Theorem \ref{Thm: coherence wrt different atlases} we prove that, in such a scenario, the reconstruction is independent of the atlas.

In the Euclidean space $\mathbb{R}^d$ setting, if the coherence parameter $\gamma$ is non-positive, one has existence of the reconstruction, yet without uniqueness.
This result can be achieved also on the smooth manifold setting, as we discuss in Theorem \ref{Rmk: gamma<0}, where we prove existence without uniqueness of the global reconstructed distributions. We underline that, in addition to being non-unique, these global reconstructed distributions depend on the atlas and on the partition of unity used to construct them.

\paragraph{Outline of the Paper}
The paper is organized as follows: in Section \ref{Sec: main definitions} we introduce the notion of \emph{germ of distributions} on a smooth manifold $M$ and the notion of \emph{coherence}, which is the key to the reconstruction. 
In this section we also discuss \emph{enhanced coherence}. Moreover, in this section we prove that coherence does not depend on the atlas.\\
In Section \ref{Sec: Reconstruction theorem} we state and prove the \emph{reconstruction theorem} for a $\gamma$-coherent germ of distributions on a smooth manifold, with $\gamma>0$. 
In the same section we also discuss the independence of the reconstruction from the atlas for $\gamma>0$. Eventually, we state and prove the reconstruction theorem for $\gamma$-coherent germs of distributions with $\gamma\leq0$. \\
Finally, on the one hand, in Appendix \ref{Appendix: distribution on manifold} we shall recall some notions of distributions theory on smooth manifolds in order for the paper to be self-contained. On the other hand, in Appendix \ref{Appendix: enhanced coherence on an open set} we discuss coherence and enhanced coherence on an open set of the Euclidean space $\mathbb{R}^d$, which is a propedeutical case study to the case of a smooth manifold.
 
\paragraph{Notation}
In the following, $M$ will denote a $d$-dimensional connected smooth manifold such that $\partial M=\emptyset$. Moreover, we will endow the manifold $M$ with the Borel $\sigma$-algebra. The pair $(U,\phi)$ denotes a generic local chart of $U$: \emph{i.e.}, $U\subset M$ is an open set and $\phi:U\to\phi(U)\subset\mathbb{R}^d$ is a diffeomorphism, representing a coordinate on $U$. Given a generic function $f:M\to N$, with $M$ and $N$ smooth manifold, $f^*$ and $f_*$ shall denote the pull-back and the push-forward, respectively, via this map.

We write $\mathcal{D}(M)$ for the space of smooth and compactly supported functions over $M$, endowed with the usual locally convex topology and $\mathcal{D}^\prime(M)$ shall denote the space of distributions over $M$, see Appendix \ref{Appendix: distribution on manifold} for further details. Moreover, $B(0,1)\subset\mathbb{R}^d$ will be the unitary ball centred at the origin. 
Given $U\subset\mathbb{R}^d$ and a function $f\in\mathcal{D}(U)$, we introduce the following rescaled version of this function, for $x\in U$, 
\begin{align}\label{Eq: scaled test-function}
f_x^\lambda(y)\vcentcolon=\lambda^{-d}f(\lambda^{-1}(y-x))
\end{align}
for $\lambda\in(0,1]$. We shall also adopt the following convention: $f^\lambda\equiv f^\lambda_0$.
In the following, we shall integrate test-functions $f\in\mathcal{D}(\mathbb{R}^d)$ with respect to the Lebesgue measure $dx$ on $\mathbb{R}^d$. This is just for convenience, a priori we could consider any measure on $\mathbb{R}^d$ which is absolutely continuous with respect to the Lebesgue measure.
Eventually, the symbol $\lesssim$ shall denote the inequality up to a multiplicative finite constant. 

\paragraph{Acknowledgements.}
We are thankful to  F. Caravenna, C. Dappiaggi, N. Drago and L. Zambotti for helpful discussions on the topic. We are thankful to  C. Dappiaggi for the valuable comments on a first version of this manuscript. The work of the authors is supported by a Doctoral Fellowship of the University of Pavia.
The work of P.R. was partly supported by a fellowship of the ``Progetto Giovani GNFM 2019" under the project ``Factorization Algebras vs AQFTs on Riemannian manifolds" fostered by the National Group of Mathematical Physics (GNFM-INdAM).

\section{Main Definitions}\label{Sec: main definitions}
In this section we shall define the main tools we are going to use in the paper. We shall also prove some of their properties.

Following \cite{CZ20}, we start by introducing the notion of \emph{germ} of distributions. 
\begin{definition}
Let $M$ be a $d$-dimensional smooth manifold. We define \emph{germ of distributions} over $M$ a family $F=\{F_p\}_{p\in M}$ of distributions, $F_p\in\mathcal{D}^\prime(M)$ for any $p\in M$, such that, for any $h\in\mathcal{D}(M)$, $p\mapsto F_p(h)$ is a measurable map with respect to the Borel $\sigma$-algebra of the manifold $M$.
\end{definition}

\begin{remark}
The idea at the heart of the notion of germ of distributions is that, under a further assumption which is \emph{coherence}, $F_p$ can be seen as an approximation, locally at the point $p\in M$, of a global distribution $\mathcal{R}F$. The idea of the reconstruction theorem is that of associating, under suitable assumptions, this (unique) global distribution with the germ of distributions.
\end{remark}
We shall now define the notion of coherent germ of distributions on a manifold, which is the key for the reconstruction theorem.
\begin{definition}\label{Def: coherence}
Let $M$ be a smooth $d$-dimensional manifold and let $\mathcal{A}=\{(U_\alpha,\phi_\alpha)\}_{\alpha}$ be a smooth atlas on $M$.
Let $F=(F_p)_{p\in M}$ be a germ of distributions on $M$ and $\gamma\in\mathbb{R}$. We say that $F$ is $\gamma$-coherent on $(M,\mathcal{A})$ if for any $(U,\phi)\in\mathcal{A}$ there exists $f\in\mathcal{D}(\phi(U))$, with $\int_{\mathbb{R}^d} dx\,f(x)\neq0$, such that for any compact set $K\subset U$ there exists $\alpha_K^U\leq\min\{0,\gamma\}$ such that
\begin{align}\label{Eq: coherence condition}
\vert(\phi_*(F_p)-\phi_*(F_q))(f^\lambda_{\phi(q)})\vert\lesssim\lambda^{\alpha_K^U}(|\phi(p)-\phi(q)|+\lambda)^{\gamma-\alpha_K^U}\,,
\end{align}
uniformly for $p,q\in K$, $\lambda\in(0,1]$.
\end{definition}

\begin{remark}
At first sight, the above definition depends on the atlas $\mathcal{A}$. Nonetheless in Proposition \ref{Prop: independence from the atlas} we shall prove that the above definition is actually independent of the atlas.
\end{remark}

\begin{remark}\label{Rmk: who cares about 1}
In the previous definition, we adopted the constraint $\lambda\in(0,1]$. Nonetheless, this can be replaced by $\lambda\in(0,\eta]$, for any $\eta>0$. Indeed, all bounds are given up to a multiplicative constant. We shall use this fact in the following when discussing enhanced coherence in Appendix \ref{Appendix: enhanced coherence on an open set}. Moreover, in the following we shall be interested in the behaviour of all structures for $\lambda\to0^+$. These are not influenced by the choice of $\eta>0$.
\end{remark}

First of all, we can refine the dependence on the atlas of the notion of coherence. In particular, it is independent of the coordinates.
\begin{proposition}\label{Proposition: independence on charts}
With the notation of Definition \ref{Def: coherence}, let $F$ be a $\gamma$-coherent germ on $(M,\mathcal{A})$ and let $(U,\phi)\in\mathcal{A}$. Let $(U,\psi)$ be a second chart on the same open set $U\subset M$. Then the $\gamma$-coherence condition holds true also with respect to the chart $(U,\psi)$. Moreover, also the $\bm{\alpha}^U$ parameters are independent of the coordinates.
\end{proposition}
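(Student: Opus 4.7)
The plan is to transport the coherence bound from $\phi$ to $\psi$ via the local diffeomorphism $\chi\vcentcolon=\psi\circ\phi^{-1}\colon\phi(U)\to\psi(U)$. The starting observation is that $\psi_*F_p=\chi_*(\phi_*F_p)$, so that testing $\psi_*F_p$ against a rescaled test function $g^\lambda_{\psi(q)}$ is equivalent, by the definition of the pushforward of a distribution on $\mathcal{D}'(\phi(U))$, to testing $\phi_*F_p$ against $(g^\lambda_{\psi(q)}\circ\chi)\,|\det J_\chi|$.

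The first technical step is a careful change of variables. Fix a compact $K\subset U$ and choose $g\in\mathcal{D}(\psi(U))$ with $\int g\neq 0$ and with support small enough that $g^\lambda_{\psi(q)}\in\mathcal{D}(\psi(U))$ for all $q\in K$ and $\lambda\in(0,1]$. The substitution $y=\phi(q)+\lambda z$ yields
\begin{align*}
\bigl(g^\lambda_{\psi(q)}\circ\chi\bigr)(y)\,|\det J_\chi(y)|=(\tilde g_{q,\lambda})^\lambda_{\phi(q)}(y)\,,
\end{align*}
where
\begin{align*}
\tilde g_{q,\lambda}(z)\vcentcolon= g\bigl(\lambda^{-1}(\chi(\phi(q)+\lambda z)-\psi(q))\bigr)\,|\det J_\chi(\phi(q)+\lambda z)|\,.
\end{align*}
A further application of the change of variables formula gives $\int\tilde g_{q,\lambda}(z)\,dz=\int g\neq 0$, and the smoothness of $\chi$ on the compact set $\phi(K)$ ensures that the family $\{\tilde g_{q,\lambda}\}_{q\in K,\lambda\in(0,1]}$ is uniformly bounded in every seminorm of $\mathcal{D}(\mathbb{R}^d)$ and has a common compact support.

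The hard step is to use the $\gamma$-coherence of $F$ in the chart $(U,\phi)$ to control $|(\phi_*F_p-\phi_*F_q)((\tilde g_{q,\lambda})^\lambda_{\phi(q)})|$. Since $\tilde g_{q,\lambda}$ depends on both $q$ and $\lambda$, one cannot apply \eqref{Eq: coherence condition} verbatim with the fixed witness $f$; what is needed is a \emph{test function robustness} property, asserting that once the coherence bound holds for some witness $f$ with $\int f\neq 0$ and exponent $\alpha_K^U$, the analogous bound holds, with the same $\alpha_K^U$, for any equi-bounded family of compactly supported test functions with nonzero integral. This is precisely what the enhanced coherence discussion in Section \ref{Sec: main definitions} provides, with its Euclidean prototype treated in Appendix \ref{Appendix: enhanced coherence on an open set}. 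Once invoked, it yields
\begin{align*}
\bigl|(\phi_*F_p-\phi_*F_q)\bigl((\tilde g_{q,\lambda})^\lambda_{\phi(q)}\bigr)\bigr|\lesssim\lambda^{\alpha_K^U}(|\phi(p)-\phi(q)|+\lambda)^{\gamma-\alpha_K^U}\,.
\end{align*}

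To conclude, since $\chi$ is bilipschitz on $\phi(K)$, one has $|\phi(p)-\phi(q)|\asymp|\psi(p)-\psi(q)|$ uniformly on $K$. Up to a finite multiplicative constant the right-hand side above therefore equals $\lambda^{\alpha_K^U}(|\psi(p)-\psi(q)|+\lambda)^{\gamma-\alpha_K^U}$, which combined with the first step is exactly the $\gamma$-coherence condition for $F$ in the chart $(U,\psi)$ with witness $g$ and with the \emph{same} exponent $\alpha_K^U$. This proves simultaneously the chart-independence of $\gamma$-coherence on $U$ and the coordinate-independence of the $\bm{\alpha}^U$ parameters.
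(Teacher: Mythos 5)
Your proof follows the same high-level strategy as the paper (transport the coherence bound through the transition map $\chi=\psi\circ\phi^{-1}$), but you execute the central step with more care, and in fact you have correctly put your finger on a point the paper's own proof glosses over. The paper writes the change of coordinates as
\begin{align*}
\vert(\psi_*(F_p)-\psi_*(F_q))(g^\lambda_{\psi(q)})\vert \lesssim \vert(\phi_*(F_p)-\phi_*(F_q))(((\psi\circ\phi^{-1})^*g)^\lambda_{\phi(q)})\vert\,,
\end{align*}
attributing the inequality to $\|\mathrm{Jac}(\psi\circ\phi^{-1})\|_\infty\lesssim1$. This treats $\chi^*(g^\lambda_{\psi(q)})$ as if it were (up to constants) $((\chi^*g))^\lambda_{\phi(q)}$, i.e.\ as if pull-back and rescaling commuted. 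That identification is exact when $\chi$ is affine, but for a generic nonlinear transition map it fails, and one cannot in any case bound one distributional pairing by another this way. Your computation shows the correct form: $\chi^*(g^\lambda_{\psi(q)})=(\tilde g_{q,\lambda})^\lambda_{\phi(q)}$ with a test function $\tilde g_{q,\lambda}$ that genuinely depends on $q$ and $\lambda$. Since Definition \ref{Def: coherence} fixes a single witness $f$, the coherence bound cannot be applied verbatim; one needs the uniformity over families of test functions with controlled $C^r$ norms, which is precisely what the enhanced coherence of Proposition \ref{Prop:enhanced_coherence on an open set} (whose proof is purely Euclidean and independent of the present Proposition, so there is no circularity) provides. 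Your observation that the derivatives of $\tilde g_{q,\lambda}$ stay uniformly bounded as $\lambda\to0^+$ — because $\partial_z^k\bigl[\lambda^{-1}(\chi(\phi(q)+\lambda z)-\psi(q))\bigr]=\lambda^{k-1}D^k\chi(\phi(q)+\lambda z)$ for $k\geq1$ — is the right justification.

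Two small points you should tighten. First, the support of $\tilde g_{q,\lambda}$ is uniformly bounded (by bilipschitzness of $\chi$ on $\phi(K)$), but need not be contained in $B(0,1)$; this is harmless, since one can either shrink the support of $g$ at the outset or rescale, and Remark \ref{Rmk: who cares about 1} records the relevant flexibility. Second, the claim that literally the \emph{same} $\alpha_K^U$ survives is slightly too strong: Proposition \ref{Prop:enhanced_coherence on an open set} trades $\alpha_K$ for $\alpha_{\overline{K}_{D_K/2}}$, the exponent attached to an enlargement of $K$. Since the paper's statement only asserts that the $\bm{\alpha}^U$ family is independent of coordinates (and these exponents may always be decreased without harming the bound), this is cosmetic, but it is worth phrasing accurately.
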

\begin{proof}
First of all, coherence on $(M,\mathcal{A})$, entails the existence of a test-function $f\in\mathcal{D}(\phi(U))$ with $\int_{\mathbb{R}^d}dx\,f(x)\neq0$ such that for any compact set $K\subset U$ there exists $\alpha_K^U\leq\min\{0,\gamma\}$ for which
\begin{align*}
\vert(\phi_*(F_p)-\phi_*(F_q))(f^\lambda_{\phi(q)})\vert\lesssim\lambda^{\alpha_K^U}(|\phi(p)-\phi(q)|+\lambda)^{\gamma-\alpha_K^U}\,,
\end{align*}
uniformly for $p,q\in K$, and for $\lambda\in(0,1]$.
We prove that there exists a test-function $\tilde{g}\in\mathcal{D}(\psi(U))$ such that the coherence condition with respect to the chart $(U,\psi)$ is satisfied. For any $g\in\mathcal{D}(\psi(U))$, it holds
\begin{align*}
\vert(\psi_*(F_p)-\psi_*(F_q))(g^\lambda_{\psi(q)})\vert&=\vert((\psi\circ\phi^{-1})_*\phi_*(F_p)-(\psi\circ\phi^{-1})_*\phi_*(F_q))(g^\lambda_{\psi(q)})\vert\\
&\lesssim \vert(\phi_*(F_p)-\phi_*(F_q))((\psi\circ\phi^{-1})^*g)^\lambda_{\phi(q)}\vert\,,
\end{align*}
where  the last inequality descends from $\|\mathrm{Jac}(\psi\circ\phi^{-1})\|_\infty\lesssim1$, where $\|\cdot\|_\infty$ denotes the supremum norm and $\mathrm{Jac}(\psi\circ\phi^{-1})$ denotes the Jacobian of coordinates change. We can now choose $\tilde{g}\in\mathcal{D}(\psi(U))$ such that $(\psi\circ\phi^{-1})^*\tilde{g}=f$, with $f\in\mathcal{D}(\phi(U))$ as above. As a consequence, we get 
\begin{align*}
\vert(\psi_*(F_p)-\psi_*(F_q))(\tilde{g}^\lambda_{\psi(q)})\vert\lesssim\lambda^{\alpha_K^U}(|\phi(p)-\phi(q)|+\lambda)^{\gamma-\alpha_K^U}\lesssim\lambda^{\alpha_K^U}(|\psi(p)-\psi(q)|+\lambda)^{\gamma-\alpha_K^U}\,,
\end{align*}
where in the last inequality we exploited the uniform bound 
\begin{align*}
\sup_{\overset{p,q\in K}{p\neq q}}\frac{|\phi(p)-\phi(q)|}{|\psi(p)-\psi(q)|}\lesssim1\,.
\end{align*}
\end{proof}

\begin{remark}\label{Remark: equivalent definition of coherence}
The notion of coherence of a germ of distributions can be stated in an equivalent form by splitting the two cases $|\phi(p)-\phi(q)|\leq\lambda$ and $|\phi(p)-\phi(q)|>\lambda$. In particular, Equation \eqref{Eq: coherence condition} can be rewritten as
\begin{align}\label{Eq: equivalent definition of coherence}
\vert(\phi_*(F_p)-\phi_*(F_q))(f^\lambda_{\phi(q)})\vert\lesssim 
 \begin{cases}
\lambda^{\gamma}\,,&\qquad\mathrm{if}\quad0\leq |\phi(p)-\phi(q)|\leq\lambda\,, \\
\lambda^{\alpha^U_K}|\phi(p)-\phi(q)|^{\gamma-\alpha_K^U}\,,&\qquad\mathrm{if}\quad |\phi(p)-\phi(q)|>\lambda\,, \\
   \end{cases}
\end{align}
\end{remark}
\paragraph{Enhanced Coherence}
In this paragraph we shall refine the notion of coherence on a smooth manifold. This leads to the notion of \emph{enhanced coherence}. 
In the same spirit of \cite{CZ20}, the idea is to drop the dependence on the particular test function $f\in\mathcal{D}(\phi(U))$.
To this end, we resort to the same argument for the case of an open subset of $\mathbb{R}^d$.
Indeed, on account of Definition \ref{Def: coherence}, given a $\gamma$-coherent germ $F_p$ on a smooth manifold $(M,\mathcal{A})$ and a local chart $(U,\phi)\in\mathcal{A}$, $\mathcal{F}_{\phi(p)}\vcentcolon=\phi_*(F_p)$ is a $\gamma$-coherent germ of distributions on the open set $\phi(U)\subset\mathbb{R}^d$, in the sense of Definition \ref{Def:coherence on an open set}.
As a consequence, we can apply \emph{locally} Proposition \ref{Prop:enhanced_coherence on an open set} to get the following definition of coherence on a smooth manifold, which is equivalent to Definition \ref{Def: coherence} -- \textit{cf}. Appendix \ref{Appendix: enhanced coherence on an open set}.

\begin{definition}\label{Def: coherence final version}
Let $M$ be a smooth $d$-dimensional manifold and let $\mathcal{A}=\{(U_\alpha,\phi_\alpha)\}_{\alpha}$ be a smooth atlas on $M$.
Let $F=(F_p)_{p\in M}$ be a germ of distributions on $M$ and let $\gamma\in\mathbb{R}$. We say that $F$ is $\gamma$-coherent on $(M,\mathcal{A})$ if for any $(U,\phi)\in\mathcal{A}$ and for any $K\subset U$ compact there exists $\alpha_K^U\leq\min\{0,\gamma\}$ such that, for any $r>-\alpha_K^U$
\begin{align}\label{Eq: enhanced coherence}
|(\phi_*(F_p)-\phi_*(F_q))(u_{\phi(q)}^\lambda)|\lesssim\|u\|_{C^r}\lambda^{\alpha_K^U}(|\phi(p)-\phi(q)|+\lambda)^{\gamma-\alpha_K^U}\,,
\end{align}
uniformly for $p,q\in K$, $\lambda\in(0,1]$ and $u\in\mathcal{D}(B(0,1))$, where $B(0,1)\subset\mathbb{R}^d$ denotes the unitary ball centred at the origin. 
\end{definition}

\begin{remark}
Although Definition \ref{Def: coherence} and Definition \ref{Def: coherence final version}  are equivalent, the latter is more advantageous. Indeed, it establishes a bound which is independent of the test function. As a by product, it also provides the space of $\gamma$-coherent germs of distributions with a vector space structure. At the same time, Definition \ref{Def: coherence} is preferable from a computational point of view, since it allows to establish coherence by only checking the defining property for a single test-function.
\end{remark}

\begin{remark}
Observe that the equivalence between Definition \ref{Def: coherence} and Definition \ref{Def: coherence final version} entails that also the notion of enhanced coherence is independent of the coordinates -- see Proposition \ref{Proposition: independence on charts}. Alternatively, this independence can be proven directly following an approach similar to that of Proposition \ref{Proposition: independence on charts} and exploiting the boundedness property of the Jacobian of the change of coordinates. 
\end{remark}

\begin{remark}\label{Rmk: coherence is stable wrt restriction}
On account of Proposition \ref{Prop:restriction on an open set}, also in the case of a smooth manifold, the notion of coherence is stable under restriction of an open set. More precisely, adopting the same notation of Definition \ref{Def: coherence final version}, if we consider $V\subset U$, then $F_p$ is a $\gamma$-coherent germ of distributions in the sense of Definition \ref{Def: coherence final version} also with respect to any local chart $(V,\phi)$. 
\end{remark}
We are now in position to prove that coherence is independent of the atlas.

\begin{proposition}\label{Prop: independence from the atlas}
Let $M$ be a smooth $d$-dimensional manifold and let $\mathcal{A}$ and $\mathcal{A}^\prime$ be smooth atlases on $M$. Let $F=\{F_p\}_{p\in M}$ be a germ of distributions over $M$. If $F=\{F_p\}_{p\in M}$ is $\gamma$-coherent with respect to $(M,\mathcal{A})$, then it is $\gamma$-coherent also with respect to $(M,\mathcal{A}^\prime)$.
\end{proposition}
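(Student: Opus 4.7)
The plan is to reduce the problem to the single-chart case, where Proposition \ref{Proposition: independence on charts} already applies, and then to patch the resulting local bounds by a compactness argument. Fix a chart $(U', \phi') \in \mathcal{A}'$ and a compact $K \subset U'$. For every $p \in K$ pick some $(U_p, \phi_p) \in \mathcal{A}$ with $p \in U_p$; by compactness I can then extract a finite cover of $K$ by relatively compact open sets $V_1, \dots, V_N$ with $\overline{V_i} \subset U_i \cap U'$ for certain charts $(U_i, \phi_i) \in \mathcal{A}$, and set $K_i := \overline{V_i} \cap K$.

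By stability of coherence under restriction (Remark \ref{Rmk: coherence is stable wrt restriction}), the germ $F$ is $\gamma$-coherent on $(V_i, \phi_i|_{V_i})$; Proposition \ref{Proposition: independence on charts} then transfers this, with the same parameter $\alpha_i \leq \min\{0,\gamma\}$, to the chart $(V_i, \phi'|_{V_i})$. Invoking the equivalent enhanced formulation of Definition \ref{Def: coherence final version} I obtain, for each $i$,
\begin{equation*}
|(\phi'_*(F_p) - \phi'_*(F_q))(u_{\phi'(q)}^\lambda)| \lesssim \|u\|_{C^r}\, \lambda^{\alpha_i}(|\phi'(p) - \phi'(q)| + \lambda)^{\gamma - \alpha_i},
\end{equation*}
uniformly in $p, q \in K_i$, $\lambda \in (0,1]$ and $u \in \mathcal{D}(B(0,1))$.

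To globalize, fix reference points $p_i \in K_i$ and let $\delta > 0$ be a Lebesgue number for the cover $\{V_i\}$ of $K$ in the metric $(x,y) \mapsto |\phi'(x) - \phi'(y)|$. If $|\phi'(p) - \phi'(q)| < \delta$ the pair lies in a common $K_i$ and the local estimate above yields the required bound. If instead $|\phi'(p) - \phi'(q)| \geq \delta$, with $p \in K_i$ and $q \in K_j$, I triangulate
\begin{equation*}
\phi'_*(F_p - F_q) = \phi'_*(F_p - F_{p_i}) + \phi'_*(F_{p_i} - F_{p_j}) + \phi'_*(F_{p_j} - F_q).
\end{equation*}
The two outer summands are controlled by the local enhanced coherence; the middle one is evaluated on $u^\lambda_{\phi'(q)}$, and since $\phi'_*(F_{p_i})$ and $\phi'_*(F_{p_j})$ are fixed distributions of some finite order $r_0$ on a compact neighborhood of $\phi'(K)$, this term is bounded by a constant multiple of $\|u\|_{C^{r_0}}\lambda^{-d-r_0}$.

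The main technical point is this last estimate in the far regime: the $\lambda^{-d-r_0}$ blow-up coming from the fixed reference distributions must be absorbed into the target bound. I resolve this by setting $\alpha_K^{U'} := \min\{\min_i \alpha_i,\, -d-r_0\}$, which still satisfies $\alpha_K^{U'} \leq \min\{0,\gamma\}$, the only structural constraint on this parameter. Since $|\phi'(p) - \phi'(q)| + \lambda$ is bounded below by $\delta$ and above by a constant depending on $K$, and $\gamma - \alpha_K^{U'} \geq 0$, the factor $(|\phi'(p) - \phi'(q)| + \lambda)^{\gamma - \alpha_K^{U'}}$ is comparable to a positive constant; consequently the right-hand side of the coherence bound is comparable to $\lambda^{\alpha_K^{U'}}$, which dominates the $\lambda^{-d-r_0}$ contribution. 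Merging the near and far cases yields the enhanced coherence bound on all of $K$ with parameter $\alpha_K^{U'}$, and by the equivalence of Definitions \ref{Def: coherence} and \ref{Def: coherence final version} this is precisely $\gamma$-coherence of $F$ on $(M, \mathcal{A}')$.
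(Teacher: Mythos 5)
Your overall strategy --- finite cover of $K$, Lebesgue number to split into near and far regimes, a three-term triangle inequality with reference points $p_i$, absorption of distributional blow-up by lowering the coherence parameter --- is a legitimate alternative to the paper's argument, which instead proves a claim for the union of two overlapping sets and iterates. However, there is a concrete gap in the far regime.

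In the decomposition
$\phi'_*(F_p-F_q)=\phi'_*(F_p-F_{p_i})+\phi'_*(F_{p_i}-F_{p_j})+\phi'_*(F_{p_j}-F_q)$,
you claim that "the two outer summands are controlled by the local enhanced coherence." This is true for $\phi'_*(F_{p_j}-F_q)(u^\lambda_{\phi'(q)})$, since $p_j,q\in K_j$ and the test function is centred at $\phi'(q)$, which is where Definition \ref{Def: coherence final version} puts it. It is \emph{not} true for the first summand
$\phi'_*(F_p-F_{p_i})(u^\lambda_{\phi'(q)})$: here $p,p_i\in K_i$, but the test function is centred at $\phi'(q)$ with $q\in K_j$ far from $K_i$, so the enhanced coherence bound on $K_i$ does not apply as stated. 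Neither homogeneity nor the basic order estimate for a fixed distribution helps directly, because $F_p$ ranges over an uncountable family indexed by $p\in K_i$ and no uniform order bound is assumed away from the coherence condition itself.

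This is precisely the point where the paper's proof introduces the recentring identity $u^\lambda_{\phi(q)}=\tilde u^{\lambda_1}_{\phi(a)}$ with $\tilde u=u^{\lambda_2}_w$, $\lambda_1=|\phi(q)-\phi(a)|+\lambda$, $\lambda_2=\lambda/\lambda_1$, so as to place the test function at a point of the relevant compact set; the price is $\|\tilde u\|_{C^r}\lesssim\lambda^{-r-d}\|u\|_{C^r}$, which the paper then absorbs by replacing $\alpha_{K_j}^{U'_j}$ with $\widetilde\alpha_{K_j}^{U'_j}=\alpha_{K_j}^{U'_j}-r-d$. Your proof needs the same recentring on the $\phi'_*(F_p-F_{p_i})$ term, with the resulting $\lambda^{-r-d}$ loss folded into your choice of $\alpha_K^{U'}$ alongside the $-d-r_0$ you already account for from the middle term. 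With that addition your argument goes through, and is arguably a cleaner way to organize the patching than the paper's two-set-union-and-iterate scheme; as it stands, though, the first outer term is unjustified.
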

\begin{proof}
In order to prove this result, on account of the definition of coherence, it suffices to prove that for any $(U^\prime,\phi^\prime)\in\mathcal{A}^\prime$, $F$ satisfies the bound of Equation \eqref{Eq: enhanced coherence} on $U^\prime$.
Notice moreover that we can focus on the open set $U^\prime$, \textit{i.e.}, neglecting the local chart $\phi^\prime$, since in Proposition \ref{Proposition: independence on charts} we proved that coherence is independent of the coordinates. 
Hence, let $U^\prime\in\mathcal{A}^\prime$. 
There exists a family $\{U_i\}_{i\in I}\subset\mathcal{A}$ of open sets such that $U^\prime=\bigcup_{i\in I}U_i\cap U^\prime=\bigcup_{i\in I}U_i^\prime$, where we set $U_i^\prime\vcentcolon=U_i\cap U^\prime$. 
Notice that by independence of the coordinate, \textit{cf.} Proposition  \ref{Proposition: independence on charts}, and by stability of coherence under restriction of the open set, \textit{cf.} Remark \ref{Rmk: coherence is stable wrt restriction}, $F_p$ satisfies the bound of $\gamma$-coherence, given by Equation \eqref{Eq: enhanced coherence}, on all the open sets $U^\prime_i$, for $i\in I$. 
Moreover, all these sets being contained in $U^\prime$, we can set on all of them a unique coordinate, which we call $\phi$. 
In order to prove the thesis, we first prove the following claim: the coherence bound of Equation \eqref{Eq: enhanced coherence} holds true on the union of two open sets $U^\prime_j$ and $U^\prime_\ell$, with $U^\prime_j\cap U^\prime_\ell\neq\emptyset$. \\
To this end, let $K\subset U^\prime_j\cup U^\prime_\ell$ be a compact set. Notice that if the compact set $K$ is contained in one of the two open sets $U^\prime_j$ or $U^\prime_\ell$, then the thesis holds true by construction. 
It remains to consider only the case of a compact set $K\subset U^\prime_j\cup U^\prime_\ell$ such that $K\cap U^\prime_k\neq\emptyset$ and $K\cap U^\prime_j\neq\emptyset$. \\
In this scenario, we can split the compact set as $K=K_j\cup K_\ell$, with $K_j\subset U^\prime_j$ and $K_\ell\subset U^\prime_\ell$ compact sets and $K_j\cap K_\ell\neq\emptyset$.
 In the next step, we shall prove the coherence bound, Equation \eqref{Eq: enhanced coherence}, uniformly on $p,q\in K$. 
 Notice that whether this two points were both contained in one of the two compact set $K_j$ or $K_\ell$, then the proof is already complete since these two compact sets are contained in $U^\prime_j$ and $U^\prime_\ell$ respectively. 
 As a consequence, it only remains to discuss the case with $p\in K_j\setminus U^\prime_\ell$ and $q\in K_\ell\setminus U^\prime_j$.\\
For any $u\in\mathcal{D}(B(0,1))$, by triangular inequality,
\begin{align}\label{Eq: first estimate}
|(\phi_*(F_p)-\phi_*(F_q))(u^\lambda_{\phi(q)})|\leq
\underset{|A|}{\underbrace{|(\phi_*(F_p)-\phi_*(F_a))(u^\lambda_{\phi(q)})|}}+\underset{|B|}{\underbrace{|(\phi_*(F_a)-\phi_*(F_q))(u^\lambda_{\phi(q)})|}}\,,
\end{align}
for any $a\in K_j\cap K_\ell$. 
Moreover, we fix $r>\max\big\{-\alpha_{K_j}^{U^\prime_j},-\alpha_{K_\ell}^{U^\prime_\ell}\big\}$. 
We separately estimate $|A|$ and $|B|$. First, on account of the choice of $r$ and of $a$, we have, by Equation \eqref{Eq: enhanced coherence} on $U^\prime_\ell$,
\begin{align*}
|B|\lesssim\|u\|_{\mathrm{C}^r}\lambda^{\alpha_{K_\ell}^{U^\prime_\ell}}(|\phi(a)-\phi(q)|+\lambda)^{\gamma-\alpha_{K_\ell}^{U^\prime_\ell}}\,,
\end{align*}
uniformly for $a,q\in K_\ell$ and for $\lambda\in(0,1]$. Moreover, notice that as a consequence of the estimate
\begin{align}\label{Eq: general estimate}
\sup_{\overset{\lambda\in(0,1]}{a\in K_j\cap K_\ell}}\sup_{\overset{p\in K_j\setminus U^\prime_\ell}{q\in K_\ell\setminus U^\prime_j}}\frac{(|\phi(a)-\phi(q)|+\lambda)^{\gamma-\alpha_{K_\ell}^{U^\prime_\ell}}}{(|\phi(p)-\phi(q)|+\lambda)^{\gamma-\alpha_{K_\ell}^{U^\prime_\ell}}}\lesssim 1\,,
\end{align}
we get
\begin{align}\label{Eq: bound for |B|}
|B|\lesssim\|u\|_{\mathrm{C}^r}\lambda^{\alpha_{K_\ell}^{U^\prime_\ell}}(|\phi(p)-\phi(q)|+\lambda)^{\gamma-\alpha_{K_\ell}^{U^\prime_\ell}}\,.
\end{align}
The estimate for $|A|$ requires some more steps: first of all, we notice that in $|A|$ the test-function is centred at $\phi(q)$. Nonetheless, we can center it at the point $\phi(a)$ by exploiting the argument used in the proof of \cite[Prop. 6.2]{CZ20}. This is achieved by noticing that 
\begin{align*}
u^\lambda_{\phi(q)}=\tilde{u}^{\lambda_1}_{\phi(a)}\,,\qquad\mathrm{with}\qquad\tilde{u}\vcentcolon=u^{\lambda_2}_w\,,
\end{align*}
with
\begin{align*}
\lambda_1=|\phi(q)-\phi(a)|+\lambda\,,\qquad\lambda_2=\frac{\lambda}{\lambda_1}\,,\qquad w=\frac{\phi(q)-\phi(a)}{|\phi(q)-\phi(a)|+\lambda}\,.
\end{align*}
On account of this and of the coherence on $U_j^\prime$, we get
\begin{align*}
|A|\lesssim\|\tilde{u}\|_{\mathrm{C}^r}\lambda^{\alpha_{K_j}^{U^\prime_j}}(|\phi(p)-\phi(a)|+\lambda)^{\gamma-\alpha_{K_j}^{U^\prime_j}}\,.
\end{align*}
By definition of $\tilde{u}$,
\begin{align*}
\|\tilde{u}\|_{\mathrm{C}^r}\lesssim\lambda_2^{-r-d}\|u\|_{\mathrm{C}^r}\lesssim\lambda^{-r-d}\|u\|_{\mathrm{C}^r}\,.
\end{align*}
Hence, we get
\begin{align*}
|A|\lesssim\|u\|_{\mathrm{C}^r}\lambda^{\alpha_{K_j}^{U^\prime_j}-r-d}(|\phi(p)-\phi(a)|+\lambda)^{\gamma-\alpha_{K_j}^{U^\prime_j}}\lesssim\|u\|_{\mathrm{C}^r}\lambda^{\widetilde{\alpha}_{K_j}^{U^\prime_j}}(|\phi(p)-\phi(a)|+\lambda)^{\gamma-\widetilde{\alpha}_{K_j}^{U^\prime_j}}\,,
\end{align*}
where, in the last inequality we set $\widetilde{\alpha}_{K_j}^{U^\prime_j}\vcentcolon=\alpha_{K_j}^{U^\prime_j}-r-d$ and where, always in the last inequality, we used 
\begin{align*}
\sup_{\overset{\lambda\in(0,1], p\in K_j\setminus U^\prime_\ell}{a\in K_j\cap K_\ell}}(\lambda+|\phi(p)-\phi(a)|)^{-r-d}\lesssim 1\,.
\end{align*}
With a bound analogous to Equation \eqref{Eq: general estimate}, we conclude
\begin{align}\label{Eq: bound for |A|}
|A|\lesssim\|u\|_{\mathrm{C}^r}\lambda^{\widetilde{\alpha}_{K_j}^{U^\prime_j}}(|\phi(p)-\phi(q)|+\lambda)^{\gamma-\widetilde{\alpha}_{K_j}^{U^\prime_j}}\,.
\end{align}
Finally, on account of Equations \eqref{Eq: first estimate}, \eqref{Eq: bound for |B|} and \eqref{Eq: bound for |A|}, setting $\alpha_K^{U^\prime_j\cup U_\ell^\prime}\vcentcolon=\min\{\alpha_{K_\ell}^{U_\ell^\prime}, \widetilde{\alpha}_{K_j}^{U_j^\prime}\}$, we get, for any $r>-\alpha_K^{U^\prime_j\cup U_\ell^\prime}$,
\begin{align*}
|(\phi_*(F_p)-\phi_*(F_q))(u^\lambda_{\phi(q)})|\lesssim\|u\|_{\mathrm{C}^r}\lambda^{\alpha_K^{U^\prime_j\cup U_\ell^\prime}}(|\phi(p)-\phi(q)|+\lambda)^{\gamma-\alpha_K^{U^\prime_j\cup U_\ell^\prime}}\,,
\end{align*}
uniformly on $p,q\in K$. This concludes the proof of the claim. In order to conclude the proof of the proposition, we distinguish two cases. On the one hand, if the open set $U^\prime$ is bounded, then the proof is complete since $U^\prime$ can be covered by a finite number of open sets $U_i\in\mathcal{A}$ and it suffices to iterate the above procedure for a finite number of times. On the other hand, if $U^\prime$ is unbounded, it suffices to notice that for any compact set $K\subset U^\prime$, there exists a finite subset $J\subset I$ of indices such that $K\subset\cup_{j\in J}U^\prime_j$. As a consequence, we can get a coherence parameter $\alpha_K^{U^\prime}$ for $K$ by iterating the above claim a finite number of times. 
\end{proof}

We give two simple examples of coherent germs on a smooth manifold.
\begin{example}
A simple example of coherent germ of distributions on a smooth manifold $M$ is the following. Consider a distribution $t\in\mathcal{D}^\prime(M)$ and set $F_p\vcentcolon=t$ for any $p\in M$. Since $F_p-F_q=0$ for any $p,q\in M$, we conclude that, on any $U\subset M$, $\{F_p\}_{p\in M}$ is coherent with any parameters $(\bm{\alpha}^U,\gamma)$.
\end{example}
\begin{example}
Notice that our construction is a generalization of the one of \cite{CZ20}: indeed, we recover their construction if we consider the case $M=\mathbb{R}^d$ endowed of the trivial atlas $(\mathbb{R}^d,\mathrm{Id})$. As a consequence, all examples discussed in \cite{CZ20}, such as Taylor polynomials, are coherent germs with respect to this atlas.
\end{example}

Eventually, we introduce a \emph{homogeneity} parameter for coherent germs of distributions.
\begin{lemma}\label{Lemma: homogeneity}
Let $F=(F_p)_{p\in M}$ be a $\gamma$-coherent germ of distributions on a smooth manifold $(M,\mathcal{A})$ and let $(U,\phi)\in\mathcal{A}$ be a local chart. For any compact set $K\subset U$, there exists $\beta_K^U<\gamma$ such that
\begin{align}\label{Eq: homogeneity condition}
\vert \phi_*(F_p)(f_{\phi(p)}^\lambda)\vert\lesssim\lambda^{\beta_K^U}\,,
\end{align}
uniformly for $p\in K$, $\lambda\in(0,1]$ and where $f\in\mathcal{D}(\phi(U))$ is chosen as in Definition \ref{Def: coherence}. We say that $F$ is \emph{locally homogeneous} in $U$ with parameters $\bm{\beta}^U=(\beta_K^U)_K$.
\end{lemma}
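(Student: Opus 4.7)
The plan is to bound $\phi_*(F_p)(f^\lambda_{\phi(p)})$ by adding and subtracting a reference distribution at a fixed point $q_0\in K$, and then treating the two resulting contributions with different tools. If $K=\emptyset$ the statement is vacuous, so fix an arbitrary $q_0\in K$ and write
\[
\phi_*(F_p)(f^\lambda_{\phi(p)})=\bigl[\phi_*(F_p)-\phi_*(F_{q_0})\bigr](f^\lambda_{\phi(p)})+\phi_*(F_{q_0})(f^\lambda_{\phi(p)})\,.
\]

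For the first summand I would invoke the coherence inequality of Definition \ref{Def: coherence}, exchanging the roles of $p$ and $q$ so that the test function is centred at $\phi(p)$ (legitimate by the symmetry of the absolute value under renaming, with $p,q_0\in K$). Since $\alpha_K^U\leq\min\{0,\gamma\}$ the exponent $\gamma-\alpha_K^U$ is non-negative, and because $p,q_0$ both lie in the compact set $K$ and $\lambda\leq 1$, the factor $(|\phi(p)-\phi(q_0)|+\lambda)^{\gamma-\alpha_K^U}$ is uniformly bounded; what remains is a contribution of order $\lambda^{\alpha_K^U}$. For the second summand I would use that $\phi_*(F_{q_0})$ is a single, fixed distribution on $\phi(U)$; all the test functions $f^\lambda_{\phi(p)}$ for $p\in K$ and $\lambda\in(0,1]$ are supported in one common compact subset $L\subset\phi(U)$, on which $\phi_*(F_{q_0})$ has some finite order $N$. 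Plugging the scaling identity $\|\partial^\beta f^\lambda_{\phi(p)}\|_\infty\lesssim\lambda^{-d-|\beta|}$ into the corresponding seminorm estimate then gives $|\phi_*(F_{q_0})(f^\lambda_{\phi(p)})|\lesssim\lambda^{-d-N}$.

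Combining the two estimates via the triangle inequality and using $\lambda\in(0,1]$ (so that $\lambda^a\leq\lambda^b$ whenever $a\leq b$), I would define
\[
\beta_K^U:=\min\{\alpha_K^U,\,-d-N,\,\gamma-1\}\,,
\]
which is strictly less than $\gamma$ by construction, and deduce the claimed bound $|\phi_*(F_p)(f^\lambda_{\phi(p)})|\lesssim\lambda^{\beta_K^U}$. The only subtle point is securing the strict inequality $\beta_K^U<\gamma$. When $\gamma>0$ this is automatic since $\alpha_K^U\leq 0<\gamma$, but for $\gamma\leq 0$ neither $\alpha_K^U$ nor $-d-N$ need sit strictly below $\gamma$ \emph{a priori}. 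The device of lowering the final exponent by a fixed positive amount (the $\gamma-1$ entry in the minimum) sidesteps the issue at no cost, since weakening the exponent only weakens the bound on $(0,1]$.
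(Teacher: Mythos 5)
Your proposal is correct and follows essentially the same route as the paper's own proof: fix a reference point in $K$, split $\phi_*(F_p)(f^\lambda_{\phi(p)})$ by the triangle inequality, bound the difference $[\phi_*(F_p)-\phi_*(F_{q_0})](f^\lambda_{\phi(p)})$ via the coherence estimate (using that $\gamma-\alpha_K^U\geq 0$ and $K$ is compact), and bound the fixed-distribution term $\phi_*(F_{q_0})(f^\lambda_{\phi(p)})$ via a finite-order seminorm estimate yielding $\lambda^{-d-N}$. The only genuine added care in your version is the $\gamma-1$ entry in the minimum to secure the strict inequality $\beta_K^U<\gamma$ also when $\gamma\leq0$; the paper merely writes $\beta_K^U\leq\min\{-d-r,\alpha_K^U,\gamma\}$ and leaves strictness implicit, so your remark is a small but legitimate tightening of the same argument.
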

\begin{proof}
The proof is analogous to that on $\mathbb{R}^d$ \cite{CZ20}.
First of all, given a compact set $K\subset U$ and $q\in K$, since $\phi_*(F_q)\in\mathcal{D}^\prime(\phi(U))$, there exists, \textit{cf}. \cite[Remark 3.5]{CZ20}, $r\in\mathbb{N}$ such that
\begin{align*}
\vert \phi_*(F_q)(f_{\phi(p)}^\lambda)\vert\lesssim\lambda^{-d-r}\,,
\end{align*}
uniformly for $p\in K$ and for $\lambda\in(0,1]$. Moreover, since $\mathrm{Diam}(\phi(K))\vcentcolon=\sup_{p,q\in K}|\phi(p)-\phi(q)|<\infty$, on account of the coherence condition it holds, uniformly for $p,q\in K$,
\begin{align*}
\vert(\phi_*(F_p)-\phi_*(F_q))(f^\lambda_{\phi(q)})\vert\lesssim\lambda^{\alpha_K^U}(|\phi(p)-\phi(q)|+\lambda)^{\gamma-\alpha_K^U}\leq\lambda^{\alpha_K^U}(\mathrm{Diam}(\phi(K))+\lambda)^{\gamma-\alpha_K^U}\lesssim\lambda^{\alpha_K^U}\,.
\end{align*}
In addition, thanks to the triangular inequality it descends
\begin{align*}
\vert \phi_*(F_p)(f_{\phi(p)}^\lambda)\vert\leq\vert \phi_*(F_q)(f_{\phi(p)}^\lambda)\vert+\vert(\phi_*(F_p)-\phi_*(F_q))(f_{\phi(p)}^\lambda)\vert\lesssim\lambda^{-d-r}+\lambda^{\alpha_K^U}\lesssim\lambda^{\min\{-d-r,\alpha_K^U\}}\,.
\end{align*}
As a consequence, it suffices to choose any $\beta_K^U\leq\min\{-d-r,\alpha_K^U\}$. Eventually, we can choose $\beta_K^U\leq\min\{-d-r,\alpha_K^U,\gamma\}$. This concludes the proof.
\end{proof}
Similarly to the case of the coherence parameters, also those related to homogeneity are independent of the coordinates.

\begin{proposition}
Let $(U,\phi)\in\mathcal{A}$ be a local chart on a smooth manifold $M$ and let $F=(F_p)_{p\in M}$ be a $\gamma$-coherent germ with respect to $(M,\mathcal{A})$. Moreover let $f\in\mathcal{D}(\phi(U))$ be the test-function as per Definition \ref{Def: coherence}. Let $\bm{\beta}^U=(\beta_K^U)_K$ be the homogeneity parameters of $F$, as per Lemma \ref{Lemma: homogeneity}. Let $(U,\psi)$ be a local chart on the same open set $U\subset M$. Then the homogeneity condition, \textit{cf}. Equation \eqref{Eq: homogeneity condition}, holds true also with respect to $(U,\psi)$.
\end{proposition}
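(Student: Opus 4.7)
The plan is to recognize this proposition as an immediate corollary of Lemma \ref{Lemma: homogeneity} combined with the change-of-chart argument already developed in Proposition \ref{Proposition: independence on charts}. The strategy is to transfer the homogeneity bound available for $f$ in the chart $(U,\phi)$ to a suitable test-function $\tilde{g}\in\mathcal{D}(\psi(U))$ in the chart $(U,\psi)$ by a verbatim repetition of the Jacobian manipulation used there.

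Concretely, I would first choose $\tilde{g}\in\mathcal{D}(\psi(U))$ such that $(\psi\circ\phi^{-1})^*\tilde{g}=f$, namely $\tilde{g}=f\circ(\psi\circ\phi^{-1})^{-1}$; this is well-defined since $\psi\circ\phi^{-1}$ is a diffeomorphism between open subsets of $\mathbb{R}^d$. For such a $\tilde{g}$, the same manipulation as in the proof of Proposition \ref{Proposition: independence on charts} (now applied to a single distribution $\phi_*(F_p)$ rather than to the difference $\phi_*(F_p)-\phi_*(F_q)$) yields
\begin{align*}
\vert\psi_*(F_p)(\tilde{g}^\lambda_{\psi(p)})\vert
=\vert(\psi\circ\phi^{-1})_*\phi_*(F_p)(\tilde{g}^\lambda_{\psi(p)})\vert
\lesssim\vert\phi_*(F_p)(f^\lambda_{\phi(p)})\vert,
\end{align*}
uniformly for $p\in K$ and $\lambda\in(0,1]$, where the last inequality exploits the uniform bound $\|\mathrm{Jac}(\psi\circ\phi^{-1})\|_\infty\lesssim 1$ on the compact set $\phi(K)$, together with the analogous bound for the inverse change of coordinates. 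An application of Lemma \ref{Lemma: homogeneity} in the chart $(U,\phi)$ with the test-function $f$ then gives $\vert\phi_*(F_p)(f^\lambda_{\phi(p)})\vert\lesssim\lambda^{\beta_K^U}$, which combined with the previous display yields the homogeneity bound in $(U,\psi)$ with the same parameter $\bm{\beta}^U=(\beta_K^U)_K$ and the test-function $\tilde{g}$.

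No genuinely new difficulty is expected: the key technical point, namely the Jacobian estimate, has already been carried out in the proof of Proposition \ref{Proposition: independence on charts}, and the reduction to Lemma \ref{Lemma: homogeneity} in the original chart bypasses any need to re-examine the test-function hypotheses (in particular the condition $\int\tilde{g}(x)\,dx\neq 0$) in the new chart. The main cosmetic point to comment on is simply that the rescaled quantity $\tilde{g}^\lambda_{\psi(p)}$, after composition with $\psi\circ\phi^{-1}$, is controlled in pairing with $\phi_*(F_p)$ by $f^\lambda_{\phi(p)}$ up to a multiplicative constant, which is exactly the content of the Jacobian bound already established in Proposition \ref{Proposition: independence on charts}.
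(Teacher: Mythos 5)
Your proposal is correct and follows essentially the same route as the paper: you pick the transported test-function $\tilde g=(\phi\circ\psi^{-1})^*f$, invoke the uniform Jacobian bound from the proof of Proposition~\ref{Proposition: independence on charts} to control $\vert\psi_*(F_p)(\tilde g^\lambda_{\psi(p)})\vert$ by $\vert\phi_*(F_p)(f^\lambda_{\phi(p)})\vert$, and then apply Lemma~\ref{Lemma: homogeneity} in the original chart. This is exactly the paper's one-line chain of inequalities, so no comparison beyond noting the agreement is needed.
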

\begin{proof}
Let $K\subset U$ be a compact set and let $(U,\phi)$ ad $(U,\psi)$ be as above. It holds, uniformly for $p\in K$ and $\lambda\in(0,1]$,
\begin{align*}
\vert \psi_*(F_p)(((\phi\circ\psi^{-1})^*f)_{\psi(p)}^\lambda)\vert \lesssim \lvert \phi_*(F_p)(f_{\phi(p)}^\lambda) \rvert \lesssim \lambda^{\beta^U_K}\,,
\end{align*}
where in the first inequality we exploited $\|\mathrm{Jac}(\psi\circ\phi^{-1})\|_\infty\lesssim1$, whereas in the last inequality we used the homogeneity hypothesis with respect to the chart $(U,\phi)$. It follows that $\psi_*(F_p)$ has homogeneities $(\beta_K^U)$ with respect to the test function $(\phi\circ\psi^{-1})^*f\in\mathcal{D}(\psi(U))$.
\end{proof}

\section{Reconstruction Theorem}\label{Sec: Reconstruction theorem}
We are now in position to state the main result of the paper, the \emph{reconstruction theorem} on a smooth manifold, for $\gamma$-coherent germs of distributions with $\gamma>0$, \textit{cf}. Theorem \ref{Thm: main result}.
\begin{theorem}\label{Thm: reconstruction}
Let $M$ be a $d$-dimensional smooth manifold and let $\mathcal{A}=\{(U_j,\phi_j)\}_{j}$ be an atlas over $M$. Let $\gamma>0$ and let $F=(F_p)_{p\in M}$ be a $\gamma$-coherent germ of distributions on $(M,\mathcal{A})$. There exists a unique distribution $\mathcal{R}F\in\mathcal{D}^\prime(M)$ such that, for any $(U,\phi)\in\mathcal{A}$, $\phi_*(\mathcal{R}F)\in\mathcal{D}^\prime(\phi(U))$ and it satisfies, for any compact set $K\subset U$ and for any $h\in\mathcal{D}(\phi(U))$,
\begin{align}\label{Eq: reconstruction}
|(\phi_*(\mathcal{R}F)-\phi_*(F_p))(h_{\phi(p)}^\lambda)|\lesssim\lambda^\gamma\,,
\end{align}
uniformly for $p\in K$ and $\lambda\in(0,1]$.
\end{theorem}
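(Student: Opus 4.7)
The plan is to reduce the statement to the Euclidean reconstruction theorem chart-by-chart and then to glue the local pieces together using the sheaf property of distributions on $M$ recalled in Appendix \ref{Appendix: distribution on manifold}. First, I would fix an arbitrary chart $(U,\phi)\in\mathcal{A}$ and observe that, by Definition \ref{Def: coherence}, the pushforward family $\mathcal{F}_x\vcentcolon=\phi_*(F_{\phi^{-1}(x)})$, parametrised by $x\in\phi(U)$, is a $\gamma$-coherent germ on the open subset $\phi(U)\subset\mathbb{R}^d$ in the sense developed in Appendix \ref{Appendix: enhanced coherence on an open set}. The open-set version of the Euclidean reconstruction theorem from \cite{CZ20}, proved in that appendix, then produces a unique $\mathcal{R}_{(U,\phi)}\in\mathcal{D}^\prime(\phi(U))$ such that
\[
|(\mathcal{R}_{(U,\phi)}-\mathcal{F}_x)(h^\lambda_x)|\lesssim\lambda^\gamma
\]
uniformly for $x$ in compact subsets of $\phi(U)$, $\lambda\in(0,1]$ and $h\in\mathcal{D}(\phi(U))$. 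I would set $T_{(U,\phi)}\vcentcolon=\phi^*(\mathcal{R}_{(U,\phi)})\in\mathcal{D}^\prime(U)$ as the candidate local reconstruction on $U$.

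Next I would verify that the family $\{T_{(U,\phi)}\}_{(U,\phi)\in\mathcal{A}}$ is compatible on overlaps. Given two charts $(U_1,\phi_1),(U_2,\phi_2)\in\mathcal{A}$ with $W\vcentcolon=U_1\cap U_2\neq\emptyset$, Remark \ref{Rmk: coherence is stable wrt restriction} ensures that $F$ is still $\gamma$-coherent on $W$, and Proposition \ref{Proposition: independence on charts} ensures that this coherence can be read in either coordinate system. Consequently both $\phi_{1*}(T_{(U_1,\phi_1)}|_W)$ and $\phi_{1*}(T_{(U_2,\phi_2)}|_W)$ are local Euclidean reconstructions, on $\phi_1(W)$, of the same $\gamma$-coherent germ $(\phi_{1*}F_p)_{p\in W}$. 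The uniqueness clause of the Euclidean reconstruction theorem (where $\gamma>0$ enters essentially) then forces these two distributions to agree, hence $T_{(U_1,\phi_1)}|_W=T_{(U_2,\phi_2)}|_W$. By the sheaf property of $\mathcal{D}^\prime(M)$ recalled in Appendix \ref{Appendix: distribution on manifold}, the family glues to a unique $\mathcal{R}F\in\mathcal{D}^\prime(M)$ whose restriction to each $U$ equals $T_{(U,\phi)}$.

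To close the proof, I would observe that by construction $\phi_*(\mathcal{R}F)=\mathcal{R}_{(U,\phi)}$ on every chart, which is precisely the bound \eqref{Eq: reconstruction}. For global uniqueness, if $\widetilde{\mathcal{R}}F\in\mathcal{D}^\prime(M)$ satisfied the same bound, then on each chart $\phi_*(\widetilde{\mathcal{R}}F)$ would be an Euclidean reconstruction of the coherent germ $\phi_*(F)$, hence equal to $\mathcal{R}_{(U,\phi)}$; since $\mathcal{A}$ covers $M$, one concludes $\widetilde{\mathcal{R}}F=\mathcal{R}F$.

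I expect the main obstacle to be the compatibility of the local reconstructions on chart overlaps. This step hinges on the fact that coherence, and with it the very notion of being a local Euclidean reconstruction, transforms properly under change of coordinates — a fact which itself rests on the boundedness of Jacobians of transition maps on compact sets and on the rescaling identities for test functions used in Proposition \ref{Prop: independence from the atlas}. Once this compatibility is secured, the remaining steps are a routine gluing argument combined with the uniqueness assertion of the Euclidean theorem, which is available precisely because $\gamma>0$.
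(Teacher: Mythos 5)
Your proposal is correct and follows essentially the same route as the paper: localise by charts, apply the Euclidean reconstruction of \cite{CZ20} (in its open-set form from Appendix \ref{Appendix: enhanced coherence on an open set}) to obtain the family $\{\mathcal{R}_{(U,\phi)}\}$, verify compatibility on overlaps, and glue via Theorem \ref{Trm: Hormander}. The one place where you and the paper diverge is in how the overlap compatibility $(\mathcal{R}F)^{\phi(U)}=(\psi\circ\phi^{-1})^*(\mathcal{R}F)^{\psi(V)}$ is established: you appeal directly to the uniqueness clause of the Euclidean reconstruction theorem on $\phi_1(W)$, observing that both candidates are reconstructions of the same transported germ, whereas the paper estimates the difference $T=(\mathcal{R}F)^{\phi(U)}-(\psi\circ\phi^{-1})^*(\mathcal{R}F)^{\psi(V)}$ against rescaled test functions, obtains $|T(g^\lambda_{\phi(p)})|\lesssim\lambda^\gamma\to0$, and then invokes the vanishing Lemma \ref{Lemma: partial evaluation}. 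These are morally equivalent --- the vanishing lemma is precisely the engine behind the uniqueness assertion --- and in both versions the real work is the same: showing that the reconstruction bound survives a change of coordinates, via the boundedness of the Jacobian of the transition map (the estimate $|B|\lesssim\lambda^\gamma$ in the paper). Your phrasing is slightly tidier; the paper's is slightly more self-contained. Either way the argument is sound, provided one records (as you implicitly do) that restriction of a local reconstruction to a smaller open set is again a reconstruction there, so that the uniqueness clause applies on $\phi_1(W)$.
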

\begin{proof}
The proof of this result is mainly based on the application of two theorems, namely \cite[Theor. 4.4]{CZ20} and Theorem \ref{Trm: Hormander}. 
First of all, on account of Definition \ref{Def: coherence}, for any $(U,\phi)\in\mathcal{A}$, $\mathcal{F}_{\phi(p)}\vcentcolon=\phi_*(F_p)$ is a $\gamma$-coherent germ on the open set $\phi(U)\subset\mathbb{R}^d$, with $\gamma>0$. 
As a consequence, \cite[Theor. 4.4]{CZ20} implies the existence of a unique distribution $(\mathcal{R}F)^{\phi(U)}\in\mathcal{D}^\prime(\phi(U))$ such that, for any compact set $K\subset U$ and for any $h\in\mathcal{D}(\phi(U))$
\begin{align}\label{Eq: defining reconstruction}
|((\mathcal{R}F)^{\phi(U)}-\phi_*(F_p))(h_{\phi(p)}^\lambda)|\lesssim\lambda^\gamma\,,
\end{align}
uniformly for $p\in K$ and for $\lambda\in(0,1]$. 
Thus we have a family of local distributions $(\mathcal{R}F)^{\phi_j(U_j)}\in\mathcal{D}^\prime(\phi_j(U_j))$, labelled by the local charts within the atlas $\mathcal{A}$. 
Due to Theorem \ref{Trm: Hormander}, this family identifies a unique distribution, $\mathcal{R}F\in\mathcal{D}^\prime(M)$, such that $\phi_*(\mathcal{R}F)=(\mathcal{R}F)^{\phi(U)}$ for any $(U,\phi)\in\mathcal{A}$, if and only if
\begin{align}\label{Eq: overlapping condition}
(\mathcal{R}F)^{\phi(U)}=(\psi\circ\phi^{-1})^*(\mathcal{R}F)^{\psi(V)}\qquad\mathrm{on}\quad\phi(U\cap V)\,,
\end{align}
for any $(U,\phi),(V,\psi)\in\mathcal{A}$. To this end, we fix a compact $K\subset U\cap V$ and we notice that for any $g\in\mathcal{D}(\phi(U\cap V))$,
\begin{align*}
|((\mathcal{R}F)^{\phi(U)}-(\psi\circ\phi^{-1})^*&(\mathcal{R}F)^{\psi(V)})(g_{\phi(p)}^\lambda)|\\
\leq&\,\underset{|A|}{\underbrace{|((\mathcal{R}F)^{\phi(U)}-\phi_*(F_p))(g_{\phi(p)}^\lambda)|}}+\underset{|B|}{\underbrace{|((\psi\circ\phi^{-1})^*(\mathcal{R}F)^{\psi(V)}-\phi_*(F_p))(g_{\phi(p)}^\lambda)|}}\,.
\end{align*}
By construction, on the one hand $|A|\lesssim\lambda^\gamma$ uniformly on $p\in K$ and $\lambda\in(0,1]$. On the other hand, again uniformly on $p\in K$ and for $\lambda\in(0,1]$,
\begin{align*}
|B|=|((\phi\circ\psi^{-1})_*(\mathcal{R}F)^{\psi(V)}-\phi_*(F_p))(g_{\phi(p)}^\lambda)|\lesssim|((\mathcal{R}F)^{\psi(V)}-\psi_*(F_p))((\phi\circ\psi^{-1})^*g)_{\psi(p)}^\lambda|\lesssim\lambda^\gamma\,,
\end{align*}
where in the first inequality we exploited $\|\mathrm{Jac}(\phi\circ\psi^{-1})\|_\infty\lesssim1$, whereas in the last inequality we used the defining inequality, Equation \eqref{Eq: defining reconstruction}, of $(\mathcal{R}F)^{\psi(V)}$ together with $(\phi\circ\psi^{-1})^*g\in\mathcal{D}(\psi(V))$. Summarizing,
\begin{align*}
|((\mathcal{R}F)^{\phi(U)}-(\psi\circ\phi^{-1})^*(\mathcal{R}F)^{\psi(V)})(g_{\phi(p)}^\lambda)|\lesssim\lambda^\gamma\,,
\end{align*}
uniformly on $p\in K$ and $\lambda\in(0,1]$. Finally, exploiting $\gamma>0$,
\begin{align*}
|((\mathcal{R}F)^{\phi(U)}-(\psi\circ\phi^{-1})^*(\mathcal{R}F)^{\psi(V)})(g_{\phi(p)}^\lambda)|\to0\qquad\mathrm{for}\quad\lambda\to0^+\,.
\end{align*}
Applying Lemma \ref{Lemma: partial evaluation} to the distribution $T\vcentcolon=(\mathcal{R}F)^{\phi(U)}-(\psi\circ\phi^{-1})^*(\mathcal{R}F)^{\psi(V)}$ on the open set $\phi(U\cap V)$, we get
\begin{align*}
(\mathcal{R}F)^{\phi(U)}=(\psi\circ\phi^{-1})^*(\mathcal{R}F)^{\psi(V)}\qquad\mathrm{on}\quad\phi(U\cap V)\,.
\end{align*}
This concludes the proof.
\end{proof}

\begin{remark}
On account of a local version of \cite[Thm. 12.7]{CZ20} and of the \emph{homogeneity} of germs of distributions -- see Lemma \ref{Lemma: homogeneity}, one can conclude that the reconstructed distributions $\{(\mathcal{R}F)^{\phi(U)}\}_{(U,\phi)}$ appearing in the proof of the reconstruction theorem are actually elements of $\mathcal{C}^{\beta^U}$, which is a \emph{H\"older space of negative regularity} \cite[Section 12]{CZ20}. As a consequence of our construction of Theorem \ref{Thm: reconstruction} we conclude that locally the distribution $\mathcal{R}F\in\mathcal{D}^\prime(M)$ is of regularity $\mathcal{C}^{\beta^U}$.
\end{remark}

In the next theorem we investigate in detail the dependence of the \emph{reconstructed distribution} on the atlas. In particular, we shall prove that given a germ of distributions which is $\gamma$-coherent, with $\gamma>0$, then the reconstruction is independent of the atlas.
\begin{theorem}\label{Thm: coherence wrt different atlases}
Let $M$ be a $d$-dimensional smooth manifold and let $\mathcal{A}$ and $\mathcal{A}^\prime$ be two atlases over $M$. Let $\gamma>0$ and let $F=\{F_p\}_{p\in M}$ be a $\gamma$-coherent germ of distributions. Denote with $\mathcal{R}_\mathcal{A}F\in\mathcal{D}^\prime(M)$ and $\mathcal{R}_{\mathcal{A}^\prime}F\in\mathcal{D}^\prime(M)$ the reconstructed distributions associated with the germ $F$ with respect to $\mathcal{A}$ and $\mathcal{A}^\prime$, as per Theorem \ref{Thm: reconstruction}. Then $\mathcal{R}_{\mathcal{A}}F=\mathcal{R}_{\mathcal{A}^\prime}F$, \textit{i.e.}, the reconstruction is independent of the atlas.
\end{theorem}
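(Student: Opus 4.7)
The plan is to leverage the uniqueness clause of Theorem \ref{Thm: reconstruction} together with the atlas-independence of coherence established in Proposition \ref{Prop: independence from the atlas}. The simplest route, in my view, is to introduce the union atlas $\mathcal{A} \cup \mathcal{A}'$ as an auxiliary object and use it as a bridge between $\mathcal{R}_\mathcal{A} F$ and $\mathcal{R}_{\mathcal{A}'} F$, rather than attempting a direct comparison via partitions of unity.

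First I would observe that $\mathcal{A} \cup \mathcal{A}'$ is itself a smooth atlas on $M$. This is automatic: both $\mathcal{A}$ and $\mathcal{A}'$ are smooth atlases compatible with the given smooth structure on $M$, hence compatible with each other, so all transition maps between charts drawn from either atlas are smooth. Next, invoking Proposition \ref{Prop: independence from the atlas}, the $\gamma$-coherence of $F$ propagates from $\mathcal{A}$ to $\mathcal{A} \cup \mathcal{A}'$, so $F$ is a $\gamma$-coherent germ of distributions on $(M, \mathcal{A} \cup \mathcal{A}')$. Since $\gamma > 0$, Theorem \ref{Thm: reconstruction} applied to this enlarged atlas produces a distribution $\mathcal{R}_{\mathcal{A} \cup \mathcal{A}'} F \in \mathcal{D}^\prime(M)$ satisfying the bound \eqref{Eq: reconstruction} on every chart belonging to $\mathcal{A} \cup \mathcal{A}'$.

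The key step is then to exploit the uniqueness statement. Since $\mathcal{A} \subset \mathcal{A} \cup \mathcal{A}'$, the distribution $\mathcal{R}_{\mathcal{A} \cup \mathcal{A}'} F$ automatically satisfies the defining bound \eqref{Eq: reconstruction} on every chart in $\mathcal{A}$. By the uniqueness clause of Theorem \ref{Thm: reconstruction} applied to $\mathcal{A}$, this forces
\begin{equation*}
\mathcal{R}_{\mathcal{A} \cup \mathcal{A}'} F = \mathcal{R}_\mathcal{A} F.
\end{equation*}
The identical argument with $\mathcal{A}'$ in place of $\mathcal{A}$ gives $\mathcal{R}_{\mathcal{A} \cup \mathcal{A}'} F = \mathcal{R}_{\mathcal{A}'} F$, whence $\mathcal{R}_\mathcal{A} F = \mathcal{R}_{\mathcal{A}'} F$ as desired.

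There is no genuine obstacle here provided the earlier results are in hand: the proof is essentially a two-line consequence of uniqueness combined with Proposition \ref{Prop: independence from the atlas}. The only point that deserves a brief verification is the smoothness of $\mathcal{A} \cup \mathcal{A}'$ as an atlas, and this is free from the hypothesis that $M$ carries a fixed smooth structure to which all atlases under consideration refer. An alternative and equally quick route, worth mentioning for completeness, would be to show directly that $\mathcal{R}_{\mathcal{A}'} F$ satisfies the bound \eqref{Eq: reconstruction} for every chart in $\mathcal{A}$ and conclude by uniqueness; this would require a local change-of-chart estimate in the spirit of Proposition \ref{Proposition: independence on charts}, using $\|\mathrm{Jac}(\phi \circ \psi^{-1})\|_\infty \lesssim 1$ on compacts, but the union-atlas route avoids repeating such computations.
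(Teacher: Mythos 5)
Your proof is correct, and it takes a genuinely different route from the paper's. The paper compares $\mathcal{R}_\mathcal{A}F$ and $\mathcal{R}_{\mathcal{A}'}F$ directly: it fixes a chart $(U',\phi')\in\mathcal{A}'$, covers $U'$ by intersections $U'_i = U'\cap U_i$ with charts $(U_i,\phi_i)\in\mathcal{A}$, and on each such intersection performs a change of coordinates (using boundedness of the Jacobian) to show that $\phi'_*(\mathcal{R}_\mathcal{A}F|_{U'_i})$ satisfies the defining bound \eqref{Eq: reconstruction} with respect to the chart $(U'_i,\phi')$; it then concludes by the uniqueness of the \emph{local} reconstruction on $\phi'(U'_i)$ and patches the result together by a partition-of-unity argument — essentially the "alternative route" you flag at the end of your write-up. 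Your main argument instead passes through the auxiliary union atlas $\mathcal{A}\cup\mathcal{A}'$: invoking Proposition \ref{Prop: independence from the atlas} to transfer coherence to the enlarged atlas, you produce $\mathcal{R}_{\mathcal{A}\cup\mathcal{A}'}F$, and since a distribution satisfying the bound for every chart of $\mathcal{A}\cup\mathcal{A}'$ a fortiori satisfies it for every chart of $\mathcal{A}$ alone, the \emph{global} uniqueness clause of Theorem \ref{Thm: reconstruction} forces $\mathcal{R}_{\mathcal{A}\cup\mathcal{A}'}F=\mathcal{R}_\mathcal{A}F$, and by symmetry $=\mathcal{R}_{\mathcal{A}'}F$. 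This is cleaner: the union-atlas trick confines all the change-of-chart and gluing work to the already-proven Proposition \ref{Prop: independence from the atlas} and Theorem \ref{Thm: reconstruction}, whereas the paper's proof redoes a localized version of those estimates. The only thing you correctly note needs checking — that $\mathcal{A}\cup\mathcal{A}'$ is a smooth atlas — is indeed immediate since both are subsets of the maximal atlas of the fixed smooth structure on $M$. One small presentational remark: it would be worth stating explicitly that the uniqueness in Theorem \ref{Thm: reconstruction} is uniqueness among \emph{global} distributions in $\mathcal{D}'(M)$ satisfying \eqref{Eq: reconstruction} for every chart of the given atlas; this is precisely the form of uniqueness your argument exploits, and making it explicit removes any possible ambiguity with the chart-local uniqueness used in the paper's version.
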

\begin{proof}
Exploiting Theorem \ref{Thm: reconstruction}, we can associate with the germ $F$ and with the atlas $(M,\mathcal{A})$ the global distribution $\mathcal{R}_{\mathcal{A}}F\in\mathcal{D}^\prime(M)$, which is identified by means of Theorem \ref{Trm: Hormander} by the family $\{(\mathcal{R}_{\mathcal{A}}F)^{\phi(U)}\}_{(U,\phi)\in\mathcal{A}}$, with $(\mathcal{R}_{\mathcal{A}}F)^{\phi(U)}\in\mathcal{D}^\prime(\phi(U))$. 
Moreover, let $(U^\prime,\phi^\prime)\in\mathcal{A}^\prime$ be a local chart in the atlas $\mathcal{A}^\prime$.
 On the one hand, we can introduce the distribution $\phi^\prime_*(\mathcal{R}_{\mathcal{A}}F)\in\mathcal{D}^\prime(\phi^\prime(U^\prime))$ while, on the other hand, as a consequence of Theorem \ref{Thm: reconstruction} applied with reference to the atlas $\mathcal{A}^\prime$, it descends that the distribution $(\mathcal{R}_{\mathcal{A}^\prime}F)^{\phi^\prime(U^\prime)}\in\mathcal{D}^\prime(\phi^\prime(U^\prime))$. 
 Recalling the notion of distribution on a manifold, \textit{cf}. Appendix \ref{Appendix: distribution on manifold}, to conclude the proof of this theorem it suffices to show that 
\begin{align}\label{Eq: new claim}
\phi^\prime_*(\mathcal{R}_{\mathcal{A}}F)=(\mathcal{R}_{\mathcal{A}^\prime}F)^{\phi^\prime(U^\prime)}\,,\qquad\mathrm{in}\quad\mathcal{D}^\prime(\phi^\prime(U^\prime))\,.
\end{align}
To this end, there exists a family of local charts $\{(U_i,\phi_i)\}_{i\in I}\subset\mathcal{A}$, for some index set $I$, such that $U^\prime=\cup_{i\in I}(U^\prime\cap U_i)=\vcentcolon\cup_{i\in I}U^\prime_i$, with $U^\prime_i\vcentcolon=U^\prime\cap U_i$. 
We consider the restriction of Equation \eqref{Eq: new claim} to a subset $U^\prime_i$.
Recalling that coherence is stable with respect to restrictions, entailing by uniqueness $(\mathcal{R}_{\mathcal{A}^\prime}F)^{\phi^\prime(U^\prime)}|_{\phi^\prime(U^\prime_i)}=(\mathcal{R}_{\mathcal{A}^\prime}F)^{\phi^\prime(U^\prime_i)}$, we prove that, for any $i\in I$,
\begin{align}\label{Eq: latest claim}
\phi^\prime_*(\mathcal{R}_{\mathcal{A}}F|_{U^\prime_i})=(\mathcal{R}_{\mathcal{A}^\prime}F)^{\phi^\prime(U^\prime_i)}\,.
\end{align}
Via a partition of unity argument, this yields Equation \eqref{Eq: new claim}. For any compact set $K\subset U^\prime_i$ and for any $h\in\mathcal{D}(\phi^\prime(U^\prime_i))$, it holds, uniformly on $p\in K$ and $\lambda\in(0,1]$,
\begin{align}\label{Eq: check of coherence}
|(\phi^\prime_*(\mathcal{R}_{\mathcal{A}}F|_{U^\prime_i})-\phi_*^\prime(F_p))(&h_{\phi^\prime(p)}^\lambda)|\nonumber\\
\lesssim&\,|(\phi_{i*}(\mathcal{R}_{\mathcal{A}}F|_{U^\prime_i})-\phi_{i*}(F_p))((\phi^\prime\circ\phi_i^{-1})^*h)_{\phi(p)}^\lambda|\lesssim\lambda^\gamma\,,
\end{align}
where in the first inequality we performed a change of coordinates whereas in the last inequality we exploited that $\mathcal{R}_\mathcal{A}F$ reconstructs $F$ with respect to the atlas $\mathcal{A}$. 
Finally, we recall that on account of Theorem \ref{Thm: reconstruction} and of $\gamma>0$, $(\mathcal{R}_{\mathcal{A}^\prime}F)^{\phi^\prime(U^\prime_i)}$ is the unique distribution satisfying Equation \eqref{Eq: check of coherence}. 
Hence, Equation \eqref{Eq: latest claim} holds true by uniqueness. 
This concludes the proof.
\end{proof}

Eventually, we discuss the reconstruction theorem for the case of $\gamma$-coherent germs of distributions with $\gamma\leq0$.
\begin{theorem}\label{Rmk: gamma<0}
Let $M$ be a $d$-dimensional smooth manifold and let $\mathcal{A}=\{(U_j,\phi_j)\}_{j\in J}$ be an atlas over $M$, with $J$ index set. Let $\gamma\leq0$ and let $F=(F_p)_{p\in M}$ be a $\gamma$-coherent germ of distributions on $(M,\mathcal{A})$. There exists a distribution $\mathcal{R}F\in\mathcal{D}^\prime(M)$ such that, for any $(U,\phi)\in\mathcal{A}$, $\phi_*(\mathcal{R}F)\in\mathcal{D}^\prime(\phi(U))$ and it satisfies, for any compact set $K\subset U$ and for any $h\in\mathcal{D}(\phi(U))$,
\begin{equation}\label{Eq: reconstruction for gamma le0}
|(\phi_*(\mathcal{R}F)-\phi_*(F_p))(h_{\phi(p)}^\lambda)|\lesssim
\begin{cases}
\lambda^\gamma\qquad\mathrm{if}\;\gamma<0\,,\\
1+|\log\lambda|\qquad\mathrm{if}\;\gamma=0\,,
\end{cases}
\end{equation}
uniformly for $p\in K$ and $\lambda\in(0,1]$. This distribution $\mathcal{R}F\in\mathcal{D}^\prime(M)$ is non-unique.
\end{theorem}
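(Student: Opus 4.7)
The plan is to build a global reconstruction by piecing together local reconstructions on each chart via a partition of unity. Uniqueness fails and the construction manifestly depends on the choices.

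First, for every $(U_j,\phi_j)\in\mathcal{A}$ the push-forward $(\phi_j)_*F$ is a $\gamma$-coherent germ on $\phi_j(U_j)\subset\mathbb{R}^d$. Applying the Euclidean reconstruction theorem of \cite{CZ20} for non-positive coherence exponent (which provides existence but not uniqueness) yields a distribution $(\mathcal{R}F)^{\phi_j(U_j)}\in\mathcal{D}^\prime(\phi_j(U_j))$ satisfying Equation \eqref{Eq: reconstruction for gamma le0} on every compact of $\phi_j(U_j)$, and moreover uniformly in the test function with a $C^r$-norm dependence (the analogue, for the reconstruction bound, of enhanced coherence in Appendix \ref{Appendix: enhanced coherence on an open set}). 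Denote by $R_j\in\mathcal{D}^\prime(U_j)$ the pullback to the manifold, so that $(\phi_j)_*R_j=(\mathcal{R}F)^{\phi_j(U_j)}$.

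Next, I choose a locally finite open refinement $\{V_k\}$ of $\mathcal{A}$ with $\overline{V_k}$ compact and contained in some $U_{j(k)}\in\mathcal{A}$, together with a subordinate smooth partition of unity $\{\chi_k\}$, $\chi_k\in\mathcal{D}(V_k)$ and $\sum_k\chi_k=1$. Define
\begin{align*}
\mathcal{R}F\vcentcolon=\sum_k\chi_k\cdot R_{j(k)}\in\mathcal{D}^\prime(M),
\end{align*}
where each summand is extended by zero outside $U_{j(k)}$ and the sum converges in $\mathcal{D}^\prime(M)$ by local finiteness.

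To verify the bound, fix $(U,\phi)\in\mathcal{A}$, $K\subset U$ compact, and $h\in\mathcal{D}(\phi(U))$. For $\lambda\in(0,\lambda_0]$ sufficiently small (uniformly in $p\in K$), the support of $\phi^* h^\lambda_{\phi(p)}$ meets only finitely many $\mathrm{supp}(\chi_k)$, indexed by a finite set $J_K$ independent of $p$, and is contained in $U\cap U_{j(k)}$ for every $k\in J_K$. Using $\sum_k\chi_k=1$ on this support,
\begin{align*}
(\phi_*(\mathcal{R}F)-\phi_*(F_p))(h^\lambda_{\phi(p)})=\sum_{k\in J_K}\langle R_{j(k)}-F_p,\,\chi_k\,\phi^* h^\lambda_{\phi(p)}\rangle.
\end{align*}
For each term I pass to the coordinate $\phi_{j(k)}$, obtaining a test function on $\phi_{j(k)}(U_{j(k)})$ of the form $\tilde\chi_k(y)\cdot h^\lambda_{\phi(p)}(\psi_k(y))$ with $\tilde\chi_k=\chi_k\circ\phi_{j(k)}^{-1}$ and transition map $\psi_k=\phi\circ\phi_{j(k)}^{-1}$. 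A Taylor expansion of $\psi_k$ about $\phi_{j(k)}(p)$ together with the uniform bi-Lipschitz and smoothness bounds of $\psi_k$ on compacts shows that this function can be rewritten as $u^{c\lambda}_{\phi_{j(k)}(p)}$ for some $u\in\mathcal{D}(B(0,1))$ with $\|u\|_{C^r}$ bounded uniformly in $p\in K$, $\lambda\in(0,\lambda_0]$, and some constant $c$. The enhanced local reconstruction bound then produces $\lambda^\gamma$ or $1+|\log\lambda|$ for each of the $|J_K|$ terms, and summing gives the claim. For $\lambda\in(\lambda_0,1]$ the estimate is trivial, since both distributions are uniformly bounded on the relatively compact family of rescaled test functions.

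The main obstacle is the rescaling argument in the last step: showing that, after a non-linear change of variables, the test function $\tilde\chi_k\cdot(h^\lambda_{\phi(p)}\circ\psi_k)$ still has the scaling structure of a rescaled mollifier at $\phi_{j(k)}(p)$, with a $C^r$-norm bound uniform in $p\in K$ and $\lambda\in(0,\lambda_0]$. This is essentially a quantitative version of the chart-independence argument of Proposition \ref{Proposition: independence on charts}, applied now to the reconstruction bound instead of the coherence bound. Non-uniqueness of $\mathcal{R}F$ is immediate, since every step of the construction involves choices (the local reconstructions $(\mathcal{R}F)^{\phi_j(U_j)}$, the refining cover, and the partition of unity), and different choices yield inequivalent global distributions that all satisfy Equation \eqref{Eq: reconstruction for gamma le0}.
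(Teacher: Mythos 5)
Your proposal is correct and follows essentially the same route as the paper: localize via the charts, apply the Euclidean reconstruction theorem of \cite{CZ20} for $\gamma\leq 0$ on each $\phi_j(U_j)$, glue the (non-unique) local reconstructions with a partition of unity, and observe that the result is non-unique because of the many choices involved. The paper's own proof is deliberately brief, deferring the verification of the bound \eqref{Eq: reconstruction for gamma le0} to the analogous argument in \cite[Sect.~11]{CZ20}, whereas you flesh out precisely that verification (the decomposition $\mathcal{R}F - F_p = \sum_k \chi_k(R_{j(k)} - F_p)$, the change of chart, and the rescaling argument controlling $\|u\|_{C^r}$ uniformly); the rescaling step you flag as the ``main obstacle'' is indeed the crux, but it goes through by the same Jacobian-boundedness and Taylor-expansion estimates already used in Proposition~\ref{Proposition: independence on charts}, provided one uses the enhanced, $\|\cdot\|_{C^r}$-uniform form of the local reconstruction bound as you do.
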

\begin{proof}
The proof of this theorem is similar in spirit to that of Theorem \ref{Thm: reconstruction} and it uses a localized version of the same result of \cite{CZ20}. As a consequence, we only sketch  the proof. Moreover, we only discuss the case $\gamma<0$, the proof of the case $\gamma=0$ being analogous. As a consequence of \cite[Thm. 4.4]{CZ20}, for any $(U,\phi)\in\mathcal{A}$ there exists a distribution $(\mathcal{R}F)^{\phi(U)}\in\mathcal{D}^\prime(\phi(U))$ such that, for any compact set $K\subset U$ and for any $h\in\mathcal{D}(\phi(U))$
\begin{align}
|((\mathcal{R}F)^{\phi(U)}-\phi_*(F_p))(h_{\phi(p)}^\lambda)|\lesssim\lambda^\gamma\,,
\end{align}
uniformly for $p\in K$ and for $\lambda\in(0,1]$. 
Notice that, since $\gamma<0$, this distribution is non-unique. 
Nonetheless, we can choose for any $(U_j,\phi_j)\in\mathcal{A}$ a reconstructed local distribution $(\mathcal{R}F)^{\phi_j(U_j)}\in\mathcal{D}^\prime(\phi_j(U_j))$. 
We can now introduce a partition of unity $\{\rho_j\}_{j\in J}$ subordinated to the covering $\{U_j\}_{j\in J}$ of the manifold $M$. 
In this way, similarly to \cite[Sect. 11]{CZ20}, one can construct a global reconstructed distribution $\mathcal{R}_{\mathcal{A}, \rho}\in\mathcal{D}^\prime(M)$ satisfying the bound of Equation \eqref{Eq: reconstruction for gamma le0}. 
We underline that this distribution is non-unique. Indeed, it depends on the the choice of the local reconstructed distributions $(\mathcal{R}F)^{\phi_j(U_j)}\in\mathcal{D}^\prime(\phi_j(U_j))$, on the atlas $\mathcal{A}$ and on the partition of unity $\{\rho_j\}_{j\in J}$.
 The dependence on the partition of unity is a consequence of the lack of the overlapping condition, \textit{cf}. Equation \eqref{Eq: overlapping condition} \cite[Thm. 1.4.3]{FJ99}.
\end{proof}

\appendix
\section{Distributions on Smooth Manifolds}\label{Appendix: distribution on manifold}
In this appendix we shall recall some basic notions and results regarding distribution theory on smooth manifolds, in order to keep the paper self-contained. 
In particular, adopting \cite{Hor03} as a reference, we shall recall the definition of distribution on a smooth manifold $M$ and we recall also a very useful characterization of this concept, which we use extensively in the main body of the paper, in particular in the proof of the reconstruction theorem. 
\begin{definition}
Let $M$ be a $d$-dimensional smooth manifold. For any local chart $(U,\phi)$ on $M$, let  $t_{\phi(U)}\in\mathcal{D}^\prime(\phi(U))$ be a distribution satisfying the overlapping condition
\begin{align}\label{Eq: generic overlapping condition}
t_{\phi^\prime(U^\prime)}=(\phi\circ\phi^{\prime-1})^*t_{\phi(U)}\,,\qquad\mathrm{on}\quad\phi^\prime(U^\prime\cap U)\,,
\end{align}
we call the family $t_{\phi(U)}$ a distribution $t$ on the manifold $M$, denoting $t\in\mathcal{D}^\prime(M)$.
\end{definition}

The next theorem is very useful since it allows to verify the overlapping condition, Equation \eqref{Eq: overlapping condition}, only on one atlas in order to construct a global distribution in $\mathcal{D}^\prime(M)$ instead of considering all possible local charts over $M$.
\begin{theorem}\cite[Theor. 6.3.4]{Hor03}\label{Trm: Hormander}
Let $M$ be a smooth $d$-dimensional manifold and let $\mathcal{A}=\{(U_j,\phi_j)\}_{j\in J}$ be an atlas for $M$. Assume moreover that for any local chart $(U,\phi)\in\mathcal{A}$ there exists a distribution $t_{\phi(U)}\in\mathcal{D}^\prime(\phi(U))$ such that the overlapping condition,
\begin{align*}
t_{\phi^\prime(U^\prime)}=(\phi\circ\phi^{\prime-1})^*t_{\phi(U)}\,,\qquad\mathrm{on}\quad\phi^\prime(U^\prime\cap U)\,,
\end{align*}
holds true for any pair of local charts $(U,\phi),(U^\prime,\phi^\prime)\in\mathcal{A}$. Then there exists one and only one distribution $t\in\mathcal{D}^\prime(M)$ such that $\phi_*t=t_{\phi(U)}$ for any $(U,\phi)\in\mathcal{A}$.
\end{theorem}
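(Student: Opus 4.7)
The plan is to construct $t$ explicitly by patching together the local distributions $\{t_{\phi_j(U_j)}\}_{j\in J}$ via a partition of unity, and then to verify uniqueness directly from the defining property $\phi_{j*}t=t_{\phi_j(U_j)}$. First I would fix a smooth partition of unity $\{\chi_j\}_{j\in J}$ subordinate to the open cover $\{U_j\}_{j\in J}$ of $M$ and, for any test function $\psi\in\mathcal{D}(M)$, decompose $\psi=\sum_{j\in J}\chi_j\psi$. Since $\mathrm{supp}(\psi)$ is compact the sum is finite, each $\chi_j\psi$ is compactly supported in $U_j$, so $\phi_{j*}(\chi_j\psi)\in\mathcal{D}(\phi_j(U_j))$ and I set
\[
t(\psi)\vcentcolon=\sum_{j\in J}t_{\phi_j(U_j)}\bigl(\phi_{j*}(\chi_j\psi)\bigr)\,.
\]
Linearity is immediate, and continuity follows from local finiteness of the cover together with the continuity of each $t_{\phi_j(U_j)}$: on test functions supported in a fixed compact $K\subset M$ only finitely many $\chi_j$ contribute, and $\psi\mapsto\phi_{j*}(\chi_j\psi)$ is continuous from $\mathcal{D}_K(M)$ to $\mathcal{D}(\phi_j(U_j))$. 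Hence $t\in\mathcal{D}^\prime(M)$.

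The central step is to verify that $\phi_{k*}t=t_{\phi_k(U_k)}$ for every $k\in J$, and that the construction does not depend on the chosen partition. For the first point, given $g\in\mathcal{D}(\phi_k(U_k))$ let $\psi\vcentcolon=\phi_k^*g\in\mathcal{D}(U_k)\subset\mathcal{D}(M)$. Then $(\phi_{k*}t)(g)=t(\psi)=\sum_j t_{\phi_j(U_j)}\bigl(\phi_{j*}(\chi_j\psi)\bigr)$, and each summand involves a test function supported in $\phi_j(U_j\cap U_k)$. The overlap hypothesis, applied with the pair $(U_k,\phi_k),(U_j,\phi_j)$, identifies $t_{\phi_j(U_j)}$ with $(\phi_k\circ\phi_j^{-1})^*t_{\phi_k(U_k)}$ precisely on this set; combined with $\phi_{k*}(\chi_j\psi)=(\phi_k\circ\phi_j^{-1})_*\phi_{j*}(\chi_j\psi)$, this gives
\[
t_{\phi_j(U_j)}\bigl(\phi_{j*}(\chi_j\psi)\bigr)=t_{\phi_k(U_k)}\bigl(\phi_{k*}(\chi_j\psi)\bigr)\,.
\]
Summing over $j$ and using $\sum_j\chi_j=1$ on $\mathrm{supp}(\psi)\subset U_k$ collapses the right-hand side to $t_{\phi_k(U_k)}(g)$, as required. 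Independence from the partition is then a formal consequence: any other admissible partition produces a distribution with the same restrictions $\phi_{k*}t=t_{\phi_k(U_k)}$, so uniqueness (below) forces equality.

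For uniqueness, suppose $t,t^\prime\in\mathcal{D}^\prime(M)$ both satisfy $\phi_{j*}t=\phi_{j*}t^\prime=t_{\phi_j(U_j)}$ for every $j$. Since $\phi_{j*}$ is a bijection $\mathcal{D}^\prime(U_j)\to\mathcal{D}^\prime(\phi_j(U_j))$, this implies $t|_{U_j}=t^\prime|_{U_j}$ for all $j$. A standard partition-of-unity argument—write any $\psi\in\mathcal{D}(M)$ as $\sum_j\chi_j\psi$ with $\chi_j\psi\in\mathcal{D}(U_j)$—then yields $t(\psi)=t^\prime(\psi)$ on all of $\mathcal{D}(M)$.

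The principal technical obstacle I anticipate is keeping the pull-back/push-forward conventions straight on overlaps, in particular ensuring that the Jacobian factor implicit in the distributional transformation law $(\phi_k\circ\phi_j^{-1})^*$ matches exactly the way $\phi_{j*}(\chi_j\psi)$ transforms to $\phi_{k*}(\chi_j\psi)$. This is the one place where one has to be careful about the very definition of distribution on a manifold (as recalled in Appendix \ref{Appendix: distribution on manifold}), but it is bookkeeping rather than a conceptual hurdle; once the conventions are aligned, the overlap hypothesis does precisely the work needed to make the partition-of-unity definition coordinate-independent.
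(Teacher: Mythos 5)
Your proof is correct, and it is the standard partition-of-unity argument — which is precisely what Hörmander uses for Theorem 6.3.4, a result this paper simply cites without reproducing a proof, so there is no in-paper argument to compare against. The one point worth emphasizing is exactly the one you flag: the identity $t_{\phi_j(U_j)}\bigl(\phi_{j*}(\chi_j\psi)\bigr)=t_{\phi_k(U_k)}\bigl(\phi_{k*}(\chi_j\psi)\bigr)$ holds only because the pullback $(\phi_k\circ\phi_j^{-1})^*$ on distributions is, by definition, the transpose of the pushforward of test functions including the Jacobian determinant, and the paper's convention for $\phi_*$ on $\mathcal{D}(U)$ must carry that same factor for the duality to close; once that is fixed the overlap hypothesis does all the work, as you say. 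A very minor point for completeness: one should invoke paracompactness of $M$ to guarantee a locally finite partition of unity indexed by $J$ with $\operatorname{supp}\chi_j\subset U_j$, which is what makes the defining sum $\sum_j t_{\phi_j(U_j)}(\phi_{j*}(\chi_j\psi))$ finite and the resulting functional continuous on each $\mathcal{D}_K(M)$.
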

Now we prove a standard result of distribution theory which we use in the main body of the paper. 
\begin{lemma}\label{Lemma: partial evaluation}
Let $U\subset\mathbb{R}^d$ be an open set and let $T\in\mathcal{D}^\prime(U)$ be a distribution such that, for any $K\subset U$ compact and any $h\in\mathcal{D}(K)$ such that $\int dx\,h(x)=1$, $|T(h_x^\lambda)|\to0$ for $\lambda\to0^+$, uniformly for $x\in K$. Then, for any $f\in\mathcal{D}(U)$, $T(f)=0$.
\end{lemma}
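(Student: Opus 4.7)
The plan is to show that any $f\in\mathcal{D}(U)$ can be approximated in the topology of $\mathcal{D}(U)$ by convex combinations of rescaled test functions of the form $\rho_x^\lambda$, and then to exploit the hypothesized uniform vanishing of $T(\rho_x^\lambda)$ together with the continuity of $T$ to conclude that $T(f)=0$.

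I would first fix $f\in\mathcal{D}(U)$, set $L\vcentcolon=\mathrm{supp}(f)$, and choose $\delta>0$ so small that $K\vcentcolon=\{y\in\mathbb{R}^d:d(y,L)\le\delta\}$ is a compact subset of $U$. Next, I would select a mollifier $\rho\in\mathcal{D}(\overline{B(0,1)})$ with $\rho\ge 0$ and $\int_{\mathbb{R}^d}\rho(x)\,dx=1$. For $\lambda\in(0,\delta]$ I define the mollification
\[ f_\lambda(y)\vcentcolon=\int_U f(x)\,\rho_x^\lambda(y)\,dx, \]
which, after the change of variables $u=(y-x)/\lambda$, equals $\int\rho(u)f(y-\lambda u)\,du$. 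Thus $f_\lambda\in\mathcal{D}(U)$ with support contained in $K$ for all $\lambda\in(0,\delta]$, and a standard argument shows $f_\lambda\to f$ in $\mathcal{D}(U)$ as $\lambda\to 0^+$. By continuity of $T$ it then follows that $T(f_\lambda)\to T(f)$.

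The crucial step is to establish the identity
\[ T(f_\lambda)=\int_U f(x)\,T(\rho_x^\lambda)\,dx. \]
I would justify this by observing that $x\mapsto\rho_x^\lambda$ is continuous as a map into the Fr\'echet space $\mathcal{D}_K(U)$ of test functions supported in the fixed compact $K$, and that $x\mapsto f(x)\rho_x^\lambda$ is compactly supported in $L$. Hence $f_\lambda$ is a vector-valued (Bochner-type) integral that can be approximated by Riemann sums $\sum_j f(x_j)\rho_{x_j}^\lambda\,\Delta x_j$ converging to $f_\lambda$ in $\mathcal{D}_K(U)$; the sequential continuity of $T$ on $\mathcal{D}_K(U)$ then permits the interchange.

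Once the identity is in hand, the hypothesis applied to the mollifier $\rho$ with the compact set $L\subset U$ yields $\sup_{x\in L}|T(\rho_x^\lambda)|\to 0$ as $\lambda\to 0^+$, whence
\[ |T(f_\lambda)|\le\|f\|_{L^\infty}\,|L|\,\sup_{x\in L}|T(\rho_x^\lambda)|\to 0. \]
Combined with $T(f_\lambda)\to T(f)$ this gives $T(f)=0$, as required. The main technical obstacle I anticipate is making the commutation of $T$ with the integral fully rigorous: no new idea is involved, but one must carefully invoke the continuity of $x\mapsto\rho_x^\lambda$ into $\mathcal{D}_K(U)$ together with the sequential continuity of $T$ on test functions with common compact support.
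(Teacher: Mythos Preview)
Your argument is correct and is essentially the paper's own proof: mollify $f$ as $f\ast\rho^\lambda$, use continuity of $T$ to pass to the limit, and pull $T$ inside the convolution integral to bound by $\|f\|_\infty\sup_x|T(\rho_x^\lambda)|$. One small technicality: the hypothesis as stated asks for $h\in\mathcal{D}(K)$ with $K\subset U$, so to invoke it literally you should choose $\rho$ supported inside $U$ (e.g.\ inside your compact $K$) rather than in $\overline{B(0,1)}$, and then apply the hypothesis with a compact containing both $\mathrm{supp}\,\rho$ and $L$.
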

\begin{proof}
Let $K\subset U$ be a compact set. On account of the hypotheses, for any test-functions $u,h\in\mathcal{D}(K)$ such that $\int dx\,h(x)=1$, $u\ast h^\lambda\to u$ in $\mathcal{D}$ for $\lambda\to0^+$, where $\ast$ denotes the convolution. 
It follows, by sequential continuity of $T$, that $T(u\ast h^\lambda)\to T(u)$ for $\lambda\to0^+$. 
Furthermore,
\begin{align*}
|T(u\ast h^\lambda)|=\bigg|\int_{\mathbb{R}^d}dx\,u(x)T(h^\lambda_x)\bigg|\lesssim\|u\|_\infty\sup_{x\in K}|T(h_x^\lambda)|\,.
\end{align*}
Finally, on account of the hypothesis, we have $\sup_{x\in K}|T(h_x^\lambda)|\to0$ for $\lambda\to0^+$, implying $T(u)=0$. Since this argument holds true for any $K\subset U$ compact, this proves the thesis.
\end{proof}

\section{Coherence on an Open Set}\label{Appendix: enhanced coherence on an open set}
In this appendix we discuss the notion both of coherent germ of distributions on an open set $U\subset\mathbb{R}^d$ and of \emph{enhanced coherence} on an open set.
This ``local discussion'' will be useful to prove enhanced coherence on a smooth manifold.

We start by introducing the notion of coherence on an open set $U\subset\mathbb{R}^d$.
\begin{definition}
\label{Def:coherence on an open set}
Let $U\subset \mathbb{R}^d$ and let $\gamma \in \mathbb{R}$. We say that a germ of distributions $F=\{F_x\}_{x\in U}$, with $F_x\in\mathcal{D}^\prime(U)$ for any $x\in U$, is $\gamma$-coherent on $U$ if there exists a test function $\varphi \in \mathcal{D}(U)$ with $\int_Udx\, \varphi(x) \neq 0$ such that, for any compact set $K \subset U$, there exists $\alpha_K\leq \min\{\gamma,0\}$ for which
\begin{equation}
\label{eq:coherence}
\lvert (F_z - F_y)(\varphi^\varepsilon_y) \rvert \lesssim \varepsilon^{\alpha_K}(\lvert z- y \rvert + \varepsilon)^{\gamma-\alpha_K}\,,
\end{equation}
uniformly for $z,y \in K$ and for $\varepsilon \in \big(0,\frac{D_K}{4}\big]$, where $D_K:=\operatorname{dist}(\partial U,K)$. Here $\partial U$ denotes the boundary of $U$.
\end{definition}
\begin{remark}
Since $\partial U$ is a closed subset and $K$ is a compact subset with $\partial U \cap K = \emptyset$, then $D_K:=\operatorname{dist}(K,\partial U)>0$.
\end{remark}
\begin{remark}
In the previous definition, with respect to the case of a smooth manifold, \textit{cf}. Definition \ref{Def: coherence}, we exploited the argument of Remark \ref{Rmk: who cares about 1} for the supremum among the possible values taken by the scaling parameter $\varepsilon$, \emph{i.e.}, $\varepsilon \in \big(0,\frac{D_K}{4}\big]$. In particular, this choice of the supremum is convenient for the following discussion.
\end{remark}

\begin{remark}
Henceforth, we shall use the following notation. Given a compact set $H\subset\mathbb{R}^d$ and $\varepsilon>0$, we denote with $\bar{H}_\varepsilon$ the $\varepsilon$-enlargement of $H$, which is the set $\bar{H}_\varepsilon\vcentcolon=\{z\in\mathbb{R}^d\,:\,|z-x|\leq\varepsilon\,\mathrm{for}\,\mathrm{some}\,x\in H\}$. Notice that $\bar{H}_\varepsilon$ is compact.
\end{remark}
The idea at the base of \emph{enhanced coherence} is that of removing from the notion of coherence the dependence on the test-function $\varphi$. 
This can be achieved working in the same spirit of \cite{CZ20}, \emph{i.e.}, extending the class of test-functions paying the prize of suitably modifying some coherence parameters. 
As a premise, we state the following proposition.
\begin{proposition}
\label{prop:enhanced}
Let $U\subset \mathbb{R}^d$ be an open subset and let $T \in \mathcal{D}^\prime(U)$ be a distribution with the following property.
There is a compact subset $K\subset U$ and a test function $\varphi \in \mathcal{D}(U)$ with $\int_Udx\, \varphi(x) \neq 0$ such that, for all $x \in \overline{K}_{\frac{D}{2}}$ and for any $\varepsilon \in \{2^{-k}\}_{k\in\mathbb{N}}$
\begin{equation}
\label{eq:condition_for_uniformity}
\lvert T(\varphi^\varepsilon_x) \rvert \leq \varepsilon^\alpha f(\varepsilon,x)	\,.
\end{equation}
Here $\alpha \leq 0$, $f\colon \big(0,\frac{D}{4}\big]\times\overline{K}_{\frac{D}{2}}\to[0,\infty)$ is an arbitrary function, where $D:=\operatorname{dist(K,\partial U)}$ and $\partial U$ denotes the boundary of $U$ while $\overline{K}_{\frac{D}{2}}$ is the $\frac{D}{2}$-enlargement of $K$.
Then, for any integer $r > -\alpha$, for any $x\in K$ and for any $\psi \in \mathcal{D}(B(0,1))$,
\begin{equation}
\label{eq:uniformity}
\forall \lambda \in \biggl(0,\frac{D}{4}\biggr]\,, \qquad \lvert T(\psi^\lambda_x) \rvert \leq \mathfrak{b}_{\varphi,\alpha,r,d}\|\psi\|_{C^r}\lambda^\alpha \overline{f}(\lambda,x)\,,
\end{equation}
where $\mathfrak{b}_{\varphi,\alpha,r,d}$ is a constant while $\overline{f}\colon \big(0,\frac{D}{4}\big] \times K\to [0,\infty)$ is defined as
\begin{equation}
\overline{f}(\lambda,x):=\sup_{\lambda^\prime \in (0,\lambda], x^\prime \in B(x,2\lambda)} f(\lambda^\prime,x^\prime)\,.
\end{equation}
\end{proposition}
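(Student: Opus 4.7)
The plan is to adapt the strategy of \cite[Prop.~9.4]{CZ20} to this open-set setting via a dyadic decomposition. The key tools are the mollifier $\hat{\varphi}$ supported in $B(0,1/2)$ with $\int\hat{\varphi}=1$ and the zero-mean test-function $\check{\varphi}\vcentcolon=\hat{\varphi}^{1/2}-\hat{\varphi}^2$, both obtained as finite linear combinations $\sum_i c_i\varphi^{\lambda_i}$ of rescalings of $\varphi$, with the $c_i$'s chosen so that $\check{\varphi}$ has $r$ vanishing moments and the $\lambda_i$'s chosen dyadic. For any dyadic $\eta_0>0$ one then has the telescoping identity
\[
\psi^\lambda_x \;=\; \hat{\varphi}^{2\eta_0}\ast\psi^\lambda_x \;+\; \sum_{k=0}^{\infty} \check{\varphi}^{\eta_k}\ast\psi^\lambda_x\,,\qquad \eta_k\vcentcolon=2^{-k}\eta_0\,,
\]
arising from $\hat{\varphi}^{\eta_k}\ast\psi^\lambda_x\to\psi^\lambda_x$ as $k\to\infty$.

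First, I would use the linearity of $T$ to transfer the hypothesis \eqref{eq:condition_for_uniformity} from $\varphi^\varepsilon_y$ to corresponding bounds on $\hat{\varphi}^{2\varepsilon}_y$ and $\check{\varphi}^\varepsilon_y$, uniformly for $y\in\overline{K}_{D/2}$ and for $\varepsilon$ ranging in an appropriate (shifted) dyadic lattice inside $(0,D/4]$. Then, given $x\in K$ and $\lambda\in(0,D/4]$, I would pick $\eta_0$ as the largest admissible dyadic scale not exceeding $\lambda/2$, so that $\eta_k\leq\lambda$ for every $k\geq0$ and so that the supports of all the convolutions stay inside $\overline{K}_{D/2}$, where the extended hypothesis applies.

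Applying $T$ to both sides of the identity and exchanging it with the integral (since $\psi^\lambda_x$ is compactly supported in $U$), one arrives at
\[
T(\psi^\lambda_x) \;=\; \int T\!\left(\hat{\varphi}^{2\eta_0}_z\right)\psi^\lambda_x(z)\,dz \;+\; \sum_{k=0}^{\infty} \int T\!\left(\check{\varphi}^{\eta_k}_z\right)\psi^\lambda_x(z)\,dz\,.
\]
Each integral is then bounded via $L^1$-estimates for $\hat{\varphi}^{2\eta_0}\ast\psi^\lambda_x$ and $\check{\varphi}^{\eta_k}\ast\psi^\lambda_x$ in the spirit of \cite[Lem.~9.3]{CZ20}, combined with the extended hypothesis: the $\hat{\varphi}$-piece contributes a term of order $\|\psi\|_{C^r}\,\eta_0^\alpha\,\overline{f}(\lambda,x)$, while the $\check{\varphi}$-piece gains an extra factor $(\eta_k/\lambda)^r$ from the $r$ vanishing moments of $\check{\varphi}$, yielding a $k$-th term of order $\|\psi\|_{C^r}\,\lambda^{-r}\,\eta_k^{\alpha+r}\,\overline{f}(\lambda,x)$.

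Summing the resulting geometric series converges precisely because $r>-\alpha$, producing the desired bound $|T(\psi^\lambda_x)|\leq\mathfrak{b}_{\varphi,\alpha,r,d}\|\psi\|_{C^r}\lambda^\alpha\overline{f}(\lambda,x)$. The step I expect to be the main obstacle is the careful bookkeeping of dyadic compatibility: the hypothesis is available only at the scales $\varepsilon\in\{2^{-k}\}_{k\in\mathbb{N}}$, which forces the dilations $\lambda_i$ entering the construction of $\hat{\varphi}$ to be chosen dyadic and $\eta_0$ to be aligned with the resulting lattice. All the constant factors introduced by this alignment, together with the finite sum over $i$ of $|c_i|$ coming from the linearity argument, are absorbed into the final multiplicative constant $\mathfrak{b}_{\varphi,\alpha,r,d}$.
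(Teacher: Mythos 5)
Your plan follows the right general strategy (dyadic decomposition, a mollifier $\hat\varphi$ plus a correction $\check\varphi$ with vanishing moments, geometric summation using $r>-\alpha$), and your bookkeeping remarks about aligning the dyadic lattice and keeping all supports inside $\overline{K}_{D/2}\subset U$ are exactly the points the paper flags as the only genuine change from the Euclidean case. However, the central telescoping identity as you have written it is wrong, and the gap is not merely cosmetic.

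With $\check\varphi:=\hat\varphi^{1/2}-\hat\varphi^2$ and $\eta_k=2^{-k}\eta_0$ one has $\check\varphi^{\eta_k}=\hat\varphi^{\eta_{k+1}}-\hat\varphi^{\eta_{k-1}}$, a difference \emph{two} dyadic levels apart. The partial sums therefore do not collapse to a single term: $\hat\varphi^{2\eta_0}+\sum_{k=0}^{N}\check\varphi^{\eta_k}=\hat\varphi^{\eta_N}+\hat\varphi^{\eta_{N+1}}-\hat\varphi^{\eta_0}\to 2\delta-\hat\varphi^{\eta_0}\neq\delta$. The correct telescope, as in \cite[Prop.~12.6]{CZ20}, is driven by the auxiliary kernel $\rho:=\hat\varphi^{2}\ast\hat\varphi$, for which $\rho^{\varepsilon_{k+1}}-\rho^{\varepsilon_k}=\hat\varphi^{\varepsilon_k}\ast\check\varphi^{\varepsilon_k}$: each increment carries an \emph{extra} factor $\hat\varphi^{\varepsilon_k}$. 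This factor is not an artifact but is essential to the estimate. Writing the $k$-th term as $\int T(\hat\varphi^{\varepsilon_k}_y)\,(\check\varphi^{\varepsilon_k}\ast\psi^\lambda_x)(y)\,dy$ lets you apply the hypothesis (transferred by linearity to $\hat\varphi$) to $T(\hat\varphi^{\varepsilon_k}_y)$, giving the $\varepsilon_k^{\alpha}$, while the vanishing moments produce the $L^1$ bound $\|\check\varphi^{\varepsilon_k}\ast\psi^\lambda_x\|_{L^1}\lesssim(\varepsilon_k/\lambda)^r\|\psi\|_{C^r}$. In your decomposition there is no $\hat\varphi$-factor left to pair with $T$, so the only available bound is on $T(\check\varphi^{\eta_k}_z)$ itself, which by linearity yields $\eta_k^{\alpha}f(\eta_k,z)$ with no $(\eta_k/\lambda)^r$ gain whatsoever; the resulting series $\sum_k\eta_k^\alpha$ diverges for $\alpha<0$. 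To repair the argument, replace your telescope by the $\rho$-telescope and then estimate the increments exactly as sketched above.
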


\begin{proof}
This proof follows slavishly  that of \cite[Prop. 12.6]{CZ20}. Hence, we do not report it. We highlight only the main difference. This lies in the fact that, being this result the localization on an open set $U$ of \cite[Prop. 12.6]{CZ20}, when we consider the enlargement of the compact set $K$ we need to make sure that this is still contained in $U$. This justifies the introduction of $D:=\operatorname{dist(K,\partial U)}$. Indeed, the $\frac{D}{2}$-enlargement of $K$ is, per construction, a compact set contained in $U$.
\end{proof}

\noindent As a consequence, we have \emph{enhanced coherence}.
\begin{proposition}[Enhanced coherence]
\label{Prop:enhanced_coherence on an open set}
Let $U \subset \mathbb{R}^d$ and let $F=\{F_x\}_{x\in U}$ be a $\gamma$-coherent germ of distributions on $U$, \textit{i.e.} let Equation \eqref{eq:coherence} hold true for some $\varphi\in \mathcal{D}(U)$ and some family $\bm{\alpha}=(\alpha_K)$. Define
\begin{equation}
\label{eq:new_alpha}
\tilde{\bm{\alpha}}=(\tilde{\alpha}_K)\qquad \text{where} \quad \tilde{\alpha}_K = \alpha_{\overline{K}_{\frac{D_K}{2}}} \quad\text{and}\quad D_K=\operatorname{dist}(K,\partial U)\,.
\end{equation}
Then, for any compact set $K\subset U$ and any $r > -\tilde{\alpha}_K$,
\begin{equation}
\label{eq:enhanced_coherence}
\lvert(F_z-F_y)(\psi^\varepsilon_y)\rvert \lesssim \|\psi\|_{C^r} \varepsilon^{\tilde{\alpha}_K}(\lvert z-y \rvert + \varepsilon)^{\gamma-\tilde{\alpha}_K}\,,
\end{equation}
uniformly for $z,y \in K$, $\varepsilon \in \big(0,\frac{D_K}{4}\big]$ and $\psi\in\mathcal{D}(B(0,1))$, where $B(0,1)\subset\mathbb{R}^d$ denotes the unitary ball centred at the origin. 
\end{proposition}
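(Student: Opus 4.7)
The plan is to reduce the claim to Proposition \ref{prop:enhanced} applied, for every fixed pair $z,y\in K$, to the single distribution $T := F_z - F_y \in \mathcal{D}^\prime(U)$. Fix a compact $K\subset U$ and set $\widetilde{K}:=\overline{K}_{D_K/2}$; this is a compact subset of $U$, and since $F$ is $\gamma$-coherent on $\widetilde{K}$ with parameter $\alpha_{\widetilde{K}}=\widetilde{\alpha}_K$ (possibly after widening the admissible range of $\varepsilon$ via Remark \ref{Rmk: who cares about 1}), I can feed this bound into Proposition \ref{prop:enhanced} with the compact $K$, whose $D_K/2$-enlargement is $\widetilde{K}$.

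To check the input hypothesis \eqref{eq:condition_for_uniformity} for $T$, I would insert and subtract $F_x$ for arbitrary $x\in\widetilde{K}$ and apply $\gamma$-coherence on $\widetilde{K}$ to both summands (using $z,y,x\in\widetilde{K}$):
\begin{align*}
|T(\varphi_x^\varepsilon)| \le |(F_z - F_x)(\varphi_x^\varepsilon)| + |(F_y - F_x)(\varphi_x^\varepsilon)| \lesssim \varepsilon^{\widetilde{\alpha}_K} f(\varepsilon,x),
\end{align*}
with $f(\varepsilon,x) := (|z-x|+\varepsilon)^{\gamma-\widetilde{\alpha}_K} + (|y-x|+\varepsilon)^{\gamma-\widetilde{\alpha}_K}$. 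Since the implicit constant depends only on $\varphi$, $\widetilde{K}$, and $\gamma-\widetilde{\alpha}_K$, it is uniform in $z,y$. Proposition \ref{prop:enhanced} then yields, for every $r>-\widetilde{\alpha}_K$, every $\psi\in\mathcal{D}(B(0,1))$, every $\lambda\in(0,D_K/4]$, and every $x\in K$,
\begin{align*}
|T(\psi_x^\lambda)| \lesssim \|\psi\|_{C^r}\,\lambda^{\widetilde{\alpha}_K}\,\overline{f}(\lambda,x),\qquad \overline{f}(\lambda,x)=\sup_{\lambda'\le\lambda,\,x'\in B(x,2\lambda)}f(\lambda',x').
\end{align*}

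Specialising to $x=y$, it remains to control $\overline{f}(\lambda,y)$. By the triangle inequality, for $\lambda'\le\lambda$ and $x'\in B(y,2\lambda)$ one has $|z-x'|+\lambda' \le |z-y|+3\lambda \le 3(|z-y|+\lambda)$ and $|y-x'|+\lambda' \le 3\lambda \le 3(|z-y|+\lambda)$; since $\gamma-\widetilde{\alpha}_K\ge 0$, this produces the bound $\overline{f}(\lambda,y) \lesssim (|z-y|+\lambda)^{\gamma-\widetilde{\alpha}_K}$. Substituting yields exactly \eqref{eq:enhanced_coherence}.

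The main obstacle I foresee is purely bookkeeping: one must verify that the constants supplied by Proposition \ref{prop:enhanced} are independent of the pair $z,y\in K$ (this is automatic because $\mathfrak{b}_{\varphi,\alpha,r,d}$ there depends only on $\varphi,\alpha,r,d$ and not on $T$), and one must reconcile the admissible range $\varepsilon\in(0,D_K/4]$ required by Proposition \ref{prop:enhanced} with the range $(0,D_{\widetilde{K}}/4]\supseteq(0,D_K/8]$ guaranteed by coherence on $\widetilde{K}$. The latter mismatch is harmless by Remark \ref{Rmk: who cares about 1}, which allows us to enlarge the upper bound of the scaling parameter at the cost of a multiplicative constant.
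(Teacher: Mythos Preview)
Your proposal is correct and follows essentially the same route as the paper: apply Proposition~\ref{prop:enhanced} to $T=F_z-F_y$ after inserting and subtracting $F_x$ and invoking $\gamma$-coherence on the enlarged compact $\overline{K}_{D_K/2}$, then specialise $x=y$ and estimate $\overline{f}(\lambda,y)$ by the triangle inequality. The only cosmetic difference is that the paper packages the weight as the single expression $(|z-x|+|x-y|+\varepsilon)^{\gamma-\widetilde{\alpha}_K}$ rather than your sum of two terms, and your explicit discussion of the $\varepsilon$-range mismatch via Remark~\ref{Rmk: who cares about 1} makes precise a point the paper leaves implicit.
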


\begin{proof}
The proof follows the same lines of \cite[Prop. 13.1]{CZ20}. As above, we do not report it. We highlight only the main difference. This lies in the fact that when we consider the enlargement of the compact set $K$ we need to make sure that this is still contained $U$. This is guaranteed by the definition of $D_K$. This, together with the notion of coherence on $U$, guarantees the existence of the coherence parameters $\alpha_{\overline{K}_{\frac{D_K}{2}}}$ associated with the $\frac{D_K}{2}$-enlargement of $K$.
\end{proof}

Proposition \ref{Prop:enhanced_coherence on an open set} shows that \emph{coherence} implies \emph{enhanced coherence}. The inverse implication holds true trivially. As a consequence, we have the following equivalent definition of coherence on an open subset $U\subset\mathbb{R}^d$.
\begin{definition}
\label{def:coherence-final}
Let $U\subset \mathbb{R}^d$ be an open subset and let $\gamma \in \mathbb{R}$. We say that a germ of distributions $F=\{F_x\}_{x\in U}$ is $\gamma$-coherent on $U$ if for any compact set $K \subset U$ there exists a real number $\alpha_K\leq \min\{\gamma,0\}$ such that, for any $r>-\alpha_K$,
\begin{equation}
\label{eq:coherence-final}
\lvert (F_z - F_y)(\psi^\varepsilon_y) \rvert \lesssim\|\psi\|_{C^r} \varepsilon^{\alpha_K}(\lvert z- y \rvert + \varepsilon)^{\gamma-\alpha_K}
\end{equation}
uniformly for $z,y \in K$ and for $\varepsilon \in \big(0,\frac{D_K}{4}\big]$, where $D_K:=\operatorname{dist}(\partial U,K)$ for any $\psi\in\mathcal{D}(B(0,1))$.
\end{definition}
The notion of coherence on an open set is stable with respect to restrictions of the open set itself.
\begin{proposition}
\label{Prop:restriction on an open set}
Let $U \subset \mathbb{R}^d$ be an open set and let $V\subset U$ be an open subset. If a germ of distributions $F=\{F_x\}_{x\in U}$ is $\gamma$-coherent on $U$, then it is $\gamma$-coherent also on $V$.
\end{proposition}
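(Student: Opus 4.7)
The plan is to leverage the equivalence between Definition \ref{Def:coherence on an open set} and Definition \ref{def:coherence-final}, established via Proposition \ref{Prop:enhanced_coherence on an open set}. Working with the enhanced formulation removes the reliance on a specific test function in $\mathcal{D}(U)$ and replaces it by a uniform bound over all $\psi \in \mathcal{D}(B(0,1))$. Since $B(0,1)$ is a fixed geometric object, this form of the coherence bound has no intrinsic dependence on the ambient open set, and should therefore descend to any open subset $V \subset U$ essentially for free.

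Concretely, I would first record the monotonicity of the distance-to-boundary: for any compact $K \subset V$, the set $K$ is also a compact subset of $U$, and since $K$ is compactly contained in both open sets, standard arguments give $\operatorname{dist}(K,\partial W) = \operatorname{dist}(K,W^c)$ for $W \in \{U,V\}$; combined with the set inclusion $V^c \supset U^c$, this yields $D_K^V \leq D_K^U$, hence $(0, D_K^V/4] \subset (0, D_K^U/4]$. Next, applying Definition \ref{def:coherence-final} to the compact set $K$ viewed as a compact in $U$, I obtain a parameter $\alpha_K \leq \min\{\gamma,0\}$ such that for every $r > -\alpha_K$, every $\psi \in \mathcal{D}(B(0,1))$, every $z,y \in K$ and every $\varepsilon \in (0, D_K^U/4]$,
\[
\lvert (F_z - F_y)(\psi_y^\varepsilon)\rvert \lesssim \|\psi\|_{C^r}\,\varepsilon^{\alpha_K}(|z-y|+\varepsilon)^{\gamma-\alpha_K}.
\]
Restricting to $\varepsilon \in (0, D_K^V/4]$, the support of $\psi_y^\varepsilon$ is contained in $B(y,\varepsilon) \subset V$, so the pairing above coincides with the action of the restricted distributions $F_z|_V, F_y|_V \in \mathcal{D}^\prime(V)$, and the same estimate holds on $V$. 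This is exactly the enhanced coherence bound on $V$ with the same parameter $\alpha_K$, and by Proposition \ref{Prop:enhanced_coherence on an open set} it is equivalent to $\gamma$-coherence on $V$ in the sense of Definition \ref{Def:coherence on an open set}.

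I do not anticipate a real obstacle: the only mildly delicate point is the monotonicity $D_K^V \leq D_K^U$, which is immediate once one rewrites the distance to the boundary as $\operatorname{dist}(K, W^c)$, an antitone quantity in $W$. Had one insisted on working directly with Definition \ref{Def:coherence on an open set}, one would need to exhibit a test function $\tilde{\varphi} \in \mathcal{D}(V)$ with nonzero integral satisfying Equation \eqref{eq:coherence}, which would require either rescaling an existing $\varphi \in \mathcal{D}(U)$ into $V$ and re-estimating the constants, or invoking enhanced coherence anyway; the plan above avoids this by routing the argument through the test-function-independent form \eqref{eq:coherence-final}.
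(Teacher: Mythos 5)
Your proposal is correct and takes essentially the same approach as the paper: both apply the enhanced (test-function-independent) coherence bound of Definition~\ref{def:coherence-final} to the compact $K\subset V$ viewed as a compact in $U$, observe that $D_K^V\leq D_K^U$, and conclude by restricting the range of the scaling parameter to $\varepsilon\in\bigl(0,\tfrac{D_K^V}{4}\bigr]$. Your extra remarks---rewriting the distance to the boundary as $\operatorname{dist}(K,W^c)$ to justify the monotonicity, and checking that $\operatorname{supp}\psi_y^\varepsilon\subset V$ so the pairings descend to $\mathcal{D}^\prime(V)$---are sound elaborations of steps the paper leaves implicit.
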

\begin{proof}
Let  $K\subset V$ be a compact set and define $D_K^U\vcentcolon=\operatorname{dist(\partial U,K)}$ while $D_K^V\vcentcolon=\operatorname{dist}(\partial V, K)$. Being  $K \subset U$, there exists $\alpha_K^U\leq \min\{\gamma,0\}$ such that, for any $r > -\alpha_K^U$,
\begin{equation}
\label{eq:coherence_V}
\lvert(F_z-F_y)(\psi^\varepsilon_y)\rvert \lesssim \|\psi\|_{C^r} \varepsilon^{\alpha_K^U}(\lvert z-y \rvert + \varepsilon)^{\gamma-\alpha_K^U} 
\end{equation}
uniformly for $z,y \in K$, $\varepsilon \in \big(0,\frac{D^U_K}{4}\big]$ and $\psi\in\mathcal{D}(B(0,1))$. Since $D_K^V \leq D_K^U$, the bound in Equation \eqref{eq:coherence_V} holds true uniformly for $\varepsilon \in \big(0,\frac{D^V_K}{4}\big]$, \emph{i.e.}, the germ of distributions $F$ is $\gamma$-coherent on $V$ with parameters $(\alpha^U_K)$.
\end{proof}


\begin{thebibliography}{}
		\bibitem[BGP07]{BGP07}
		C. B\"ar, N. Ginoux, F. Pf\"affle, 
		\textit{``Wave Equations on Lorentzian Manifolds and Quantization''},
		1 st edn, Eur. Math. Soc., Z\"urich, 2007
		
		\bibitem[BFDY]{BFDY15}
	R.~Brunetti, C.~Dappiaggi, K.~Fredenhagen, Y.~Yngvason editors,
	{``Advances in Algebraic Quantum Field Theory''}, Mathematical Physics Studies 
	(2015) Springer, 455p.	
	
		\bibitem[BF09]{book}
	R.~Brunetti and K.~Fredenhagen,
	\textit{``Quantum field theory on curved backgrounds''},
	in {\em Quantum Field Theory on Curved Spacetimes}, Lecture Notes in Phys., Vol. {\bf 786} Springer, (2009), pp. 129--155.
	
		\bibitem[CZ20]{CZ20}
		F. Caravenna, L. Zambotti, 
		\textit{``Hairer's reconstruction theorem without regularity structures''},
		To appear in EMS Surveys in Math. Sci., 
		arXiv:2005.09287 (2020)
		
		\bibitem[DDK19]{DDK19}
		A. Dahlqvist, J. Diehl, B.K. Driever,
		\textit{``The Parabolic Anderson Model on Riemannian Surfaces''},
		Probability Theory and Related Fields (2019) 174: 349-444, 
		
		\bibitem[DDRZ20]{DDRZ20}
		C.~Dappiaggi, N.~Drago, P.~Rinaldi and L.~Zambotti,
		\textit{``A microlocal approach to renormalization in stochastic PDEs''},
		arXiv:2009.07640 (2020)
		
		\bibitem[DDR20]{DDR20}
	C.~Dappiaggi, N.~Drago and P.~Rinaldi,
	\textit{``The algebra of Wick polynomials of a scalar field on a Riemannian manifold''},
	Rev. Math. Phys. \textbf{32} (2020) no.08, 2050023,
	[arXiv:1903.01258 [math-ph]].
		
		\bibitem[FR16]{Fredenhagen:2014lda}
		K.~Fredenhagen and K.~Rejzner,
		\textit{``Quantum field theory on curved spacetimes: Axiomatic framework and 							examples''},
		J. Math. Phys. \textbf{57} (2016) no.3, 031101,
		[arXiv:1412.5125 [math-ph]].
		
		\bibitem[FJ99]{FJ99}
		F.G.~ Friedlander, M.~Joshi,
		\textit{``Introduction to the theory of distributions''}		,
		Cambridge University Press (1999)
		
		\bibitem[Hai14]{Hai14}
		M.~Hairer,
		\textit{``A theory of regularity structures''},
		Inv. Math. {\bf 198} (2014), 269,
		arXiv:1303.5113 [math.AP].
		
		\bibitem[H\"or03]{Hor03}
		L. H\"ormander, 
		\textit{The Analysis of Linear Partial Differential Operators I},
		Springer Berlin	(2003).
		
			\bibitem[PW81]{PW81}
	G.~Parisi and Y.~s.~Wu,
	\textit{``Perturbation Theory Without Gauge Fixing,''}
	Sci. Sin. \textbf{24} (1981), 483.
\end{thebibliography}
\end{document}